\DeclarePairedDelimiter{\ceil}{\lceil}{\rceil}
\DeclarePairedDelimiter{\floor}{\lfloor}{\rfloor}
\providecommand{\keywords}[1]
{
  \small	
  \textbf{\textit{Keywords---}} #1
}
\theoremstyle{plain}
\newtheorem{theorem}{Theorem}[section]
\newtheorem{lemma}[theorem]{Lemma}
\newtheorem{corollary}[theorem]{Corollary}
\newtheorem{definition}[theorem]{Definition}
\newtheorem{observation}[theorem]{Observation}
\newenvironment{claim}[1]{\par\noindent\underline{Claim #1:}}{}
\newenvironment{claimproof}[1]{\par\noindent\underline{Proof of Claim:}\space#1}{\hfill $\blacksquare$}
\theoremstyle{definition}
\theoremstyle{remark}
\newtheorem{remark}[theorem]{Remark}
\newtheorem{question}[theorem]{Question}
\newcommand{\Z}{\mathbb{Z}} 
\def\gz {\Gamma(\mathbb{Z}_N)}
\def\gr {\Gamma(R)}
\def\gre {\Gamma_{E}(R)}
\def\dimc {dim_{COG}}
\def\dimt {dim_{TH}}
\def\t #1{\text{#1}}
\title{Boxicity of Zero Divisor Graphs}
\author{L. Sunil Chandran 
\\
\small Department of Computer Science and Automation, Indian Institute of Science\\
\small \texttt{sunil@iisc.ac.in}
\and 
Suraj Kumar Sahoo\\
\small Department of Computer Science and Automation, Indian Institute of Science\\
\small \texttt{surajks@iisc.ac.in}
}
\date{}
\begin{document}

\maketitle

\begin{abstract}

A $d$-dimensional box is the cartesian product $[a_1,b_1]\times \cdots [a_d,b_d]$ where each $[a_i,b_i]$ is a closed interval on the real line. The boxicity of a graph, denoted as $box(G)$, is the minimum integer $d\geq 0$ such that $G$ is the intersection graph of a collection of $d$-dimensional boxes.\\[2pt]
The study of graph classes associated with algebraic structures is a fascinating area where graph theory and algebra meet. A well known class of graphs associated with rings is the class of zero divisor graphs introduced by Beck in 1988. Since then, this graph class has been studied extensively by several researchers. Denote by $Z(R)$ the set of zero divisors of a ring $R$. The zero divisor graph $\gr$ for a ring $R $ is defined as the graph with the vertex set $V(\gr)=Z(R)$ and $E(\gr)=\{\{a_i,a_j\}: a_i,a_j\in Z(R)\text{ and } a_i a_j=0\}$.\\[2pt]
Let $N=\Pi_{i=1}^a p_i^{n_i}$ be the prime factorization of $N$. In {Discrete Applied Mathematics 365 (2025), pp. 260–269}, Kavaskar proved that $box(\gz)\leq \Pi_{i=1}^a(n_i+1)-\Pi_{i=1}^a(\floor{n_i/2}+1)-1$. 
In this paper we exactly determine the boxicity of the zero divisor graph $\gz$: 
 We show that when $N\equiv 2 \pmod 4$ and $N$ is not divisible by $p^3$ for any prime divisor $p$, we have $box(\gz)= a-1$. Otherwise $box(\gz)=a$. \\[2pt]
Suppose $R$ is a non-zero commutative ring with identity with the extra property that it is a reduced ring and let $k$ be the size of the set of minimal prime ideals of $R$. In the same paper, Kavaskar showed that $box(\gr)\leq 2^k-2$. We improve this result by showing $\floor{k/2}\leq box(\gr)\leq k$ with the same assumption on $R$. \\[2pt]
In this paper we also show that $a-1\leq \dimt (\gz)\leq a$ and $\floor{k/2}\leq \dimt(\gr)\leq k$, where $\dimt$ is another dimensional parameter associated with graphs known as the threshold dimension. 
\end{abstract}
\keywords{
Zero Divisor Graphs,
    Boxicity, Threshold Dimension, Cograph Dimension}

\section{Introduction}

For a ring $R$, the set of zero divisors is defined as the set $$Z(R)=\{a\in R\setminus\{0\}|\exists b\in R\setminus\{0\} \t{  with  } ab=0\}$$
Beck in \cite{BECK1988208} associated a simple graph with any ring \( R \) by taking its elements as vertices and connecting two distinct vertices \( x \) and \( y \) if and only if $xy=0$ in \( R \). In this graph, the vertex representing \( 0 \) is adjacent to all other vertices. We can see that in such graphs, the vertices that correspond to elements of $R\setminus (Z(R)\cup \{0\})$ are only adjacent to $0$ and therefore they do not contribute to the understanding of the structure of the set of zero divisors. Consequently, D.F. Anderson et al.\cite{ANDERSON1999434} modified this concept by restricting the set of vertices to only the zero divisors of \( R \), maintaining the adjacency rule that \( x \) and \( y \) are adjacent if and only if \( xy = 0 \) in \( R \). This modified graph, known as the zero-divisor graph and denoted by \( \Gamma(R) \), has been studied extensively. \cite{9, ANDERSON20121626, 10,ANDERSON1999434,BECK1988208}.
\begin{definition}
     The zero divisor graph $\gr$ for a ring $R$ is defined as the graph with vertex set $V(\gr)=Z(R)$ and $E(\gr)=\{\{a_i,a_j\}| a_i,a_j\in Z(R)\text{ and } a_i a_j=0\}$.
\end{definition}

\noindent For a given family $\mathcal{F}$ of sets, the \textit{intersection graph} of $\mathcal{F}$ is the graph with $V(G)=\mathcal{F}$ and two vertices are adjacent if and only if the corresponding sets intersect. 
A $d$-dimensional box(or $d$-box) is a set $[a_1,b_1]\times \cdots [a_d,b_d]$ where $[a_i,b_i] $ are intervals on the real line. A $d$-dimensional cube is a $d$-box with the constraint $b_i-a_i=r$ for $1\leq i\leq d$ for some constant $r\geq 0$. A \textit{$d$-dimensional box (or cube) representation} for a graph $G$ is a function $f$ that maps every vertex $v\in G$ to a $d$-box ($d$-cube respectively) such that $uv\in E(G)$ if and only if $f(u)\cap f(v)\neq \emptyset$.   
\begin{definition}
    The boxicity of a graph $G$ $($denoted as $box(G)$$)$ is the minimum integer $d\geq0$ such that $G$ has a $d$-dimensional box representation. 
\end{definition}
\begin{definition}
    The cubicity of a graph $G$(denoted as $cub(G)$) is the minimum integer $d\geq0$ such that $G$ has a $d$-dimensional cube representation.
\end{definition}
By the above definitions, it follows that the boxicity and cubicity of complete graphs is $0$. The boxicity and cubicity of a graph are related in the following way:
\begin{theorem}[\cite{cub}]\label{cub}
    For a graph $G$ with $n$ vertices, $cub(G)\leq \ceil{\log_2(n)}box(G)$
\end{theorem}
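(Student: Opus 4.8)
The plan is to reduce the statement to a logarithmic bound on the cubicity of interval graphs. I would first record two elementary facts. \textbf{(i)} One has $box(G)\le d$ if and only if there are interval graphs $H_1,\dots,H_d$ on the common vertex set $V(G)$ with $E(G)=\bigcap_{i=1}^d E(H_i)$: given a $d$-box representation, letting $H_i$ be the interval graph of the $i$-th coordinate intervals works, since two boxes meet exactly when their projections meet in every coordinate, and conversely, overlaying interval representations of the $H_i$ coordinatewise gives a $d$-box representation. The same equivalence holds for $cub$ with unit interval graphs in place of interval graphs. \textbf{(ii)} Cubicity is subadditive under edge-intersection: if $G=G_1\cap G_2$ on a common vertex set, then $cub(G)\le cub(G_1)+cub(G_2)$; indeed, rescale a unit-cube representation of $G_1$ and one of $G_2$ so that all cubes have side $1$ and take Cartesian products -- the product box is again a unit cube, of the combined dimension, and two such products meet iff the corresponding pairs meet in both factors, i.e.\ iff the pair is an edge of $G$. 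By induction this gives $cub\bigl(\bigcap_i H_i\bigr)\le\sum_i cub(H_i)$.

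Granting \textbf{(i)} and \textbf{(ii)}, it suffices to prove the \emph{Lemma:} every interval graph $H$ on $n$ vertices has $cub(H)\le\lceil\log_2 n\rceil$. For then, writing $G=\bigcap_{i=1}^{box(G)}H_i$ with each $H_i$ an interval graph on the $n$ vertices of $G$ as in \textbf{(i)}, subadditivity yields $cub(G)\le\sum_{i} cub(H_i)\le box(G)\cdot\lceil\log_2 n\rceil$, which is exactly the claimed inequality.

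To prove the Lemma I would use the clique-path normal form of $H$: its maximal cliques can be linearly ordered as $C_1,\dots,C_k$ with $k\le n$ so that, for every vertex $v$, the cliques containing $v$ form a contiguous block $[\ell_v,r_v]\subseteq\{1,\dots,k\}$, and $uv\in E(H)$ iff $[\ell_u,r_u]\cap[\ell_v,r_v]\ne\emptyset$. A convenient building block: for a single boundary $\theta\in\{1,\dots,k-1\}$, the map sending a block to $0$, $\tfrac12$, or $1$ according to whether it lies entirely left of $\theta$, straddles $\theta$, or lies entirely right of $\theta$, scaled by $2$, defines a unit interval graph that contains all of $E(H)$ and whose only non-edges are the block-pairs separated by $\theta$ (overlapping blocks can never lie on strictly opposite sides of $\theta$, so they stay within distance $1$). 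Using one coordinate per boundary gives $cub(H)\le k\le n$; the point is to compress this to $\lceil\log_2 k\rceil\le\lceil\log_2 n\rceil$ coordinates by organizing the boundaries into a balanced dyadic hierarchy on $\{0,\dots,2^{t}-1\}$, $t=\lceil\log_2 k\rceil$, and using one coordinate per level, encoding the high-order bits of a block's position with a large multiplier, the low-order bits inside a unit window, and the critical level with enough separation to open a gap exceeding $1$. Every disjoint pair of blocks is separated by some boundary, which lies at a unique level, hence is caught by the corresponding coordinate, while overlapping blocks are kept within unit distance in every coordinate.

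The main obstacle is precisely the design of these per-level coordinates. The subtle requirement is that one fixed one-dimensional coordinate must never separate two overlapping intervals -- including the awkward case where a very long interval overlaps a tiny one, for which any rule depending only on, say, the left endpoint fails spectacularly -- and must at the same time push apart, by more than one unit, every disjoint pair whose witnessing boundary belongs to that level. Getting the dyadic scaling right so that both conditions hold simultaneously for all pairs at once, and checking that across the $\lceil\log_2 n\rceil$ levels every non-edge is killed while no edge is, is the technical heart of the argument; the reductions in \textbf{(i)}, \textbf{(ii)} and the clique-path normal form are routine.
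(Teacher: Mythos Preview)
The paper does not prove this theorem; it is quoted from the reference \cite{cub} and used only as a black box in Corollary~\ref{cubi}, so there is no in-paper argument to compare your proposal against.

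Your reduction is correct and is indeed the standard route to this inequality: write $G$ as the edge-intersection of $box(G)$ interval graphs on $V(G)$, invoke subadditivity of cubicity under edge-intersection, and reduce everything to the lemma that an $n$-vertex interval graph has cubicity at most $\lceil\log_2 n\rceil$. Facts \textbf{(i)} and \textbf{(ii)} are routine, as you say, and the clique-path normal form is the right starting point for the lemma.

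The proposal is not a proof, however, because you stop at precisely the nontrivial step. You correctly identify that the entire content lies in designing the per-level unit-interval coordinates, and that the delicate requirement is to keep every overlapping pair within unit distance while pushing apart every disjoint pair whose separating boundary sits at that level---and then you explicitly decline to give the construction. A vague appeal to ``organizing the boundaries into a balanced dyadic hierarchy'' with unspecified ``large multipliers'' and ``unit windows'' does not suffice: without an explicit map $v\mapsto$ (unit interval) for each of the $\lceil\log_2 n\rceil$ levels, together with a verification of both conditions, there is nothing to check. Your single-boundary gadget is fine and does give $cub(H)\le k$, but the passage from $k$ coordinates to $\lceil\log_2 k\rceil$ is exactly the theorem, and that compression is what you have left undone. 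As written this is a correct outline with a genuine gap at the only hard step.
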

\begin{definition}
     Interval graphs are the intersection graphs of the families of closed intervals on the real line. We will denote this class by \textit{INT}.
\end{definition}
We will denote an interval graph and its interval representation by the same symbol when there is no confusion. Suppose $I$ be an interval representation of an interval graph and $u$ be a vertex of the graph. We will denote by $I(u)$ the interval corresponding to $u$ in the interval representation $I$. 
 
\begin{theorem}[\cite{Roberts1969301}]\label{Roberts}
    A graph $G$ has boxicity at most $l>0$, if and only if it can be represented as the intersection of $l$ many interval graphs, i.e.$\t{ there exist interval supergraphs  } \{I_1, ... I_{l}\}$ of $G$ such that $\forall i\in [l]$, $V(I_i)=V(G)$ and $E(G)=E(I_1)\cap E(I_2)...\cap E(I_l)$.
\end{theorem}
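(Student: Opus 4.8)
The plan is to prove both implications by exploiting the fact that a $d$-box is, by definition, a Cartesian product of $d$ closed intervals, so that everything reduces to a coordinate-by-coordinate argument together with the elementary observation that two boxes intersect if and only if all of their coordinate projections intersect.

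For the ``if'' direction I would start from a representation $G = I_1 \cap \cdots \cap I_l$ with each $I_i$ an interval supergraph of $G$ on the vertex set $V(G)$, fix an interval representation $v \mapsto I_i(v)$ of each $I_i$, and set $f(v) = I_1(v) \times \cdots \times I_l(v)$. Then $f(u) \cap f(v) \neq \emptyset$ iff $I_i(u) \cap I_i(v) \neq \emptyset$ for every $i$, iff $uv \in E(I_i)$ for every $i$, iff $uv \in E(I_1) \cap \cdots \cap E(I_l) = E(G)$; hence $f$ is an $l$-box representation of $G$ and $box(G) \le l$.

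For the ``only if'' direction I would take an $l$-box representation $f$ with $f(v) = [a_1(v), b_1(v)] \times \cdots \times [a_l(v), b_l(v)]$ and, for each $i \in [l]$, let $I_i$ be the intersection graph on $V(G)$ of the family $\{[a_i(v), b_i(v)] : v \in V(G)\}$, which is an interval graph by construction with $V(I_i) = V(G)$. If $uv \in E(G)$ then $f(u) \cap f(v) \neq \emptyset$ forces each coordinate projection to intersect, so $uv \in E(I_i)$ for all $i$ and each $I_i$ is a supergraph of $G$; conversely $uv \in \bigcap_{i} E(I_i)$ says that all $l$ coordinate projections of $f(u)$ and $f(v)$ meet, which is precisely the statement $f(u) \cap f(v) \neq \emptyset$, i.e.\ $uv \in E(G)$. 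Hence $E(G) = \bigcap_{i=1}^{l} E(I_i)$.

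There is no serious obstacle here: the only points requiring care are the bookkeeping (insisting that all the $I_i$ share the common vertex set $V(G)$, and using the hypothesis $l > 0$ so that at least one coordinate is available) and the basic geometric fact that a product of intervals is empty iff one of its factors is empty, which is exactly what makes the two directions mirror images of each other. As a remark, the same construction, restricted throughout to intervals of a common fixed length, yields the analogous characterization of $cub(G)$.
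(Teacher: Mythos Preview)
Your argument is correct and is exactly the standard proof of this classical characterization: project an $l$-box representation onto each coordinate to get $l$ interval supergraphs whose edge intersection is $E(G)$, and conversely take the Cartesian product of given interval representations to build an $l$-box representation. The bookkeeping remarks you make (common vertex set, the product-of-intervals-is-nonempty-iff-each-factor-is lemma) are the only things one needs to check, and you have them.

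Note, however, that the paper does not actually prove this theorem: it is stated with a citation to Roberts' original paper \cite{Roberts1969301} and used as a black box throughout. So there is no ``paper's own proof'' to compare against; your write-up simply supplies the (well-known) details that the authors chose to omit.
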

\begin{definition}
    Let $G_1=(V_1,E_1), G_2=(V_2,E_2)$ be graphs. The join of $G_1$ and $G_2$, denoted as $G_1\vee G_2$ is the graph $G_3=(V_3,E_3)$ such that $V_3=V_1\sqcup V_2$ and $E_3=E_1\cup E_2\cup \{uv: u\in V_1, v\in V_2\}$. 
\end{definition}
\begin{definition}[Robert's Graph]
    Let $\overline{K_2}$, the complement of $K_2$, be the graph consisting of two isolated vertices. $\overline{K_2}\vee \overline{K_2} \cdots \vee \overline{K_2}$, the join of $t$ copies of $\overline{K_2}$ is isomorphic to the complement of $t$ isolated edges. We denote this graph as $\overline{tK_2}$ and it is known as Roberts' graph of order $2t$. 
\end{definition}
This graph was used by Roberts in his pioneering paper to demonstrate the existence of graphs on vertices with boxicity $\lfloor\frac{n}{2}\rfloor$.
The following two observations are not hard to show and will be used to prove lower bounds.
\begin{observation}
    Let $\overline{tK_2}$ be the Robert's graph of order $t$. Then $box(\overline{tK_2})=t$. 
\end{observation}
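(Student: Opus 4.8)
The plan is to prove the two inequalities $box(\overline{tK_2})\le t$ and $box(\overline{tK_2})\ge t$ separately, in both cases passing through Theorem~\ref{Roberts}, which lets us trade a box representation for a family of interval supergraphs whose edge sets intersect in $E(\overline{tK_2})$. Fix notation: write the $2t$ vertices of $\overline{tK_2}$ as $u_1,v_1,\dots,u_t,v_t$, where $\{u_i,v_i\}$ is the $i$-th copy of $\overline{K_2}$. Then the non-edges of $\overline{tK_2}$ are exactly the $t$ pairs $\{u_i,v_i\}$, and every other pair of vertices is an edge.

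For the upper bound I would exhibit $t$ interval supergraphs directly. For each $i\in[t]$, let $I_i$ be the interval graph obtained by assigning the interval $[0,1]$ to $u_i$, the interval $[2,3]$ to $v_i$, and the interval $[0,3]$ to every other vertex. Then $I_i$ is a supergraph of $\overline{tK_2}$ whose only non-edge is $\{u_i,v_i\}$ (which is also a non-edge of $\overline{tK_2}$), so $\bigcap_{i=1}^t E(I_i)$ is $E(K_{2t})$ with all the pairs $\{u_i,v_i\}$ deleted, i.e. exactly $E(\overline{tK_2})$. By Theorem~\ref{Roberts} this gives $box(\overline{tK_2})\le t$.

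For the lower bound, suppose $\overline{tK_2}$ admits a $d$-box representation; by Theorem~\ref{Roberts} there are interval supergraphs $I_1,\dots,I_d$ with $E(\overline{tK_2})=\bigcap_{k=1}^d E(I_k)$. Since $\{u_i,v_i\}\notin E(\overline{tK_2})$, for each $i$ pick $f(i)\in[d]$ with $I_{f(i)}(u_i)\cap I_{f(i)}(v_i)=\emptyset$. The claim is that $f$ is injective, which immediately forces $d\ge t$. To prove it, suppose $f(i)=f(j)=k$ with $i\neq j$. In $I_k$ the intervals $I_k(u_j)$ and $I_k(v_j)$ are disjoint; say $I_k(u_j)$ lies entirely to the left of $I_k(v_j)$, and set $r=\max I_k(u_j)<\ell=\min I_k(v_j)$. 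Because $i\neq j$, both $u_i$ and $v_i$ are adjacent in $\overline{tK_2}$, hence in $I_k$, to both $u_j$ and $v_j$; so $I_k(u_i)$ contains a point $\le r$ and a point $\ge \ell$, and being an interval it must contain the whole closed gap $[r,\ell]$. The same argument gives $[r,\ell]\subseteq I_k(v_i)$, whence $I_k(u_i)\cap I_k(v_i)\supseteq[r,\ell]\neq\emptyset$, contradicting the choice of $k=f(i)$. Thus $f$ is injective, so $box(\overline{tK_2})\ge t$, and combining with the upper bound we get $box(\overline{tK_2})=t$.

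This argument is essentially routine; the only step I would take care to state precisely is the ``gap'' observation used in the lower bound — that a closed interval intersecting two disjoint closed intervals must contain the entire closed segment between their near endpoints — since the whole injectivity argument hinges on it.
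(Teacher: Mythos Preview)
Your argument is correct in both directions. The upper bound construction is fine, and the lower bound---defining $f(i)$ as an index of an interval graph separating $u_i$ from $v_i$ and showing $f$ is injective via the ``gap'' argument---is the standard proof of Roberts' result.

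The paper itself does not supply a proof of this observation: it is stated as folklore (``not hard to show'') and used without further justification. So there is no alternative argument in the paper to compare yours against; what you have written is exactly the kind of short, self-contained proof one would expect to fill in here. The only cosmetic remark is that the observation in the paper speaks of ``Robert's graph of order $t$'' while the definition just above it introduces $\overline{tK_2}$ as having $2t$ vertices (``order $2t$''); your proof correctly works with $2t$ vertices, which is what is intended.
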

\begin{observation}\label{sub}
    Let $G$ be a graph and $H$ be an induced subgraph of $G$. Then $box(G)\geq box(H)$.
\end{observation}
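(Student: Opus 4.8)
The plan is to obtain a box representation of $H$ essentially for free from one of $G$ by discarding the vertices that are not in $H$. Concretely, set $d=box(G)$ and fix a $d$-dimensional box representation $f$ of $G$; by the definition of a box representation, for all $u,v\in V(G)$ we have $uv\in E(G)$ if and only if $f(u)\cap f(v)\neq\emptyset$. Let $g$ be the restriction of $f$ to $V(H)$, so that $g$ assigns a $d$-box to every vertex of $H$.

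Next I would check that $g$ is a valid $d$-box representation of $H$. Take any two vertices $u,v\in V(H)$. Since $H$ is an \emph{induced} subgraph of $G$, we have $uv\in E(H)$ if and only if $uv\in E(G)$, and the latter holds if and only if $g(u)\cap g(v)=f(u)\cap f(v)\neq\emptyset$. Hence $g$ is a $d$-box representation of $H$, which gives $box(H)\leq d=box(G)$, as claimed. The place where the hypothesis is used is exactly this chain of equivalences: for a non-induced subgraph one only has $E(H)\subseteq E(G)$, and the ``only if'' direction can fail.

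An equivalent route goes through Theorem \ref{Roberts}. Writing $box(G)=l$, there are interval supergraphs $I_1,\dots,I_l$ of $G$ on vertex set $V(G)$ with $E(G)=\bigcap_{i=1}^{l} E(I_i)$. For each $i$, the induced subgraph $I_i[V(H)]$ is again an interval graph (interval graphs are closed under taking induced subgraphs) and is a supergraph of $H$, and moreover $\bigcap_{i=1}^{l} E\big(I_i[V(H)]\big)=E(H)$ precisely because $H$ is induced in $G$; Theorem \ref{Roberts} then yields $box(H)\leq l$. There is no genuine obstacle in either argument — the statement is elementary and is recorded here only for later reference when establishing lower bounds — so the only point I would take care to spell out is the role of the ``induced'' hypothesis.
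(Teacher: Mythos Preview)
Your argument is correct and is the standard one; the paper itself does not supply a proof of this observation, merely recording it as ``not hard to show,'' so your restriction-of-representation argument (and the equivalent route via Theorem~\ref{Roberts}) fills in exactly the routine justification the paper omits.
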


\begin{definition}
    $G$ is a cograph if it can be constructed from isolated vertices by a sequence of disjoint union and join operations. We will denote this class of graphs as $COG$.
\end{definition}
 
 \begin{definition}
     Split graphs are the graphs in which the vertex set can be partitioned into a clique and an independent set. We will denote this class as $SPLIT$.
 \end{definition}
\begin{definition}
    The class of threshold graphs $TH=SPLIT\cap COG$.
\end{definition}
The following is an equivalent definition of threshold graphs. 
\begin{definition}\label{thresholddef}
    A graph $G$ is a threshold graph if there is a real number $S$ (the threshold) and for every vertex $v$ there is a real weight $a_v$ such that: $vw$ is an edge if and only if $a_v + a_w \geq  S$.
\end{definition}
A general notion of intersection dimension with respect to a graph class was introduced and studied in \cite{Kratochvil}.
\begin{definition}[Intersection Dimension of a Graph \cite{Kratochvil}]\label{defdim}
    Let $G$ be a graph and $e\in E(G)$. Let $\mathcal{A}$ be a class of graphs that contains $K_n$ and $K_n\setminus e$ $\forall n$. The intersection dimension of $G$ with respect to $\mathcal{A}$ (denoted as $dim_{\mathcal{A}}(G)$) is defined as follows:
    $$dim_{\mathcal{A}}(G)=\min\left\{k: \exists \t{ } G_1, ... G_k\in \mathcal{A} \t{ such that }E(G_1)\cap E(G_2)... \cap E(G_k)=E(G)\right\}$$
\end{definition}
We will study the case when $\mathcal{A}=TH$ and when $\mathcal{A}=COG$. $\dimt(G)$ is called the threshold dimension of $G$ and $\dimc(G)$ is called the cograph dimension of $G$. It can be shown that $TH\subset INT$. Therefore, it follows that for any graph $G$, $\dimt(G)\geq box(G)$.
\section{Our Results} We show the following:
\begin{itemize}
  \item Let $N=\Pi_{i=1}^a p_i^{n_i}$ be the prime factorization of $N$ with $a\geq 2$. In \cite{TK}, Kavaskar proved that $box(\gz)\leq \Pi_{i=1}^a(n_i+1)-\Pi_{i=1}^a(\floor{n_i/2}+1)-1$. 
Our result gives the exact value of the boxicity for $\gz$ in terms of the number of prime divisors of $N$: 
 We show that when $N\equiv 2 \pmod 4$ and $N$ is not divisible by $p^3$ for any prime divisor $p$, we have $box(\gz)= a-1$. Otherwise $box(\gz)=a$.\\ We also show that $a-1\leq \dimt (\gz)\leq a .$
    \item Denote the size of the set of minimal prime ideals of $R$ as $k$. In \cite{TK} it was shown that $box(\gr)\leq 2^k-2$ when $R$ is assumed to be a non-zero commutative ring with identity which is also a reduced ring.  We improve this by showing $$\floor{k/2}\leq box(\gr)\leq \dimt(\gr)\leq k$$ with the same assumption on $R$. Note that in \cite{TK}, they show that $box(\gr)\geq k$ but the proof given in the paper is incorrect because of an incorrect assumption(see remark \ref{err} for a discussion). Our lower bound uses similar ideas used in \cite{TK}, but avoids any assumption on $\gr$.
\end{itemize}  
\subsection{Results on \texorpdfstring{$\gz$}{gz}}
\begin{definition}
    For any prime $q$ we define the function $f(q,x)$ to be the non-negative integer such that $q^{f(q,x)}$ divides $x$ but $q^{f(q,x)+1}$ does not divide $x$.
\end{definition} 
It was shown in $\cite{TK}$ that when $N=p^n$ for $n\geq 3$, $box(\gz)=1$ and for $n\leq 2$, $box(\gz)=0$. Therefore, we only need to consider the case when $N$ has at least 2 prime divisors. 
     \begin{lemma}\label{lower}
     Let $N=\Pi_{i=1}^a p_i^{n_i}$ where $n_i>0$ and $a\geq 2$. Assume $p_j<p_k$ for $j<k$.
     \begin{enumerate}
         \item If $N\not\equiv2\pmod 4$ then $box(\Gamma(\mathbb{Z}_N))\geq  a$.
         \item  If $N\equiv 2\pmod 4$ and $n_j\geq 3$ for some $2\leq j\leq a$ then $box(\Gamma(\mathbb{Z}_N))\geq  a$.
         \item If $N\equiv 2\pmod 4 \t{ and } n_i\leq 2 \text{ for }2\leq i\leq a$ then $box(\Gamma(\mathbb{Z}_N))\geq  a-1$.
     \end{enumerate}
     \end{lemma}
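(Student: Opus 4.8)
The plan is to build, in each case, an induced copy of a Roberts' graph of the appropriate order and then conclude via Observation~\ref{sub} together with the observation that $box(\overline{tK_2})=t$. I will identify $\mathbb{Z}_N$ with $\mathbb{Z}_{p_1^{n_1}}\times\cdots\times\mathbb{Z}_{p_a^{n_a}}$ via the Chinese Remainder Theorem, so that a nonzero tuple $x=(x_1,\dots,x_a)$ is a vertex of $\Gamma(\mathbb{Z}_N)$ iff some $x_\ell$ is a non-unit, and $x\sim y$ iff $x_\ell y_\ell=0$ in $\mathbb{Z}_{p_\ell^{n_\ell}}$ for every $\ell$. The workhorse will be a \emph{$p_i$-gadget}: the pair $\{u_i,v_i\}$ where $u_i$ has $i$-th coordinate $1$ and all others $0$, and $v_i$ has $i$-th coordinate $p_i$ when $n_i\ge 2$ (respectively $2$ when $n_i=1$) and all others $0$. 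I will check that $u_i\not\sim v_i$ while $u_i$ and $v_i$ are each adjacent to every vertex of every $p_{i'}$-gadget with $i'\ne i$, so that the union of gadgets over any set $S$ of distinct primes induces $\overline{|S|\,K_2}$. The $p_i$-gadget is legitimate exactly when $n_i\ge 2$ or $p_i$ is odd; hence it is always available for $i\in\{2,\dots,a\}$, and is available for $i=1$ iff $N\not\equiv 2\pmod 4$.

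Parts (1) and (3) then fall out immediately: in part (1) I take all $a$ gadgets to get an induced $\overline{aK_2}$, and in part (3) I take the $a-1$ gadgets at $p_2,\dots,p_a$ to get an induced $\overline{(a-1)K_2}$ (the hypothesis $n_i\le 2$ plays no role in this bound — it only serves to separate this case from part (2)).

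The real content is part (2), where the $p_1$-gadget is unavailable because $p_1=2$, $n_1=1$: the only multiples of $N/2$ in $\mathbb{Z}_N$ are $0$ and $N/2$, so no pair of vertices can be "supported at coordinate $1$ alone". My plan is to buy back the missing dimension from a coordinate $j\ge 2$ with $n_j\ge 3$, by cramming two disjoint non-edges into coordinates $1$ and $j$ together — an induced $C_4=\overline{2K_2}$ all of whose four vertices vanish outside coordinates $1$ and $j$. Concretely I will take $\hat A,\hat C$ with first coordinate $1$ and $j$-th coordinate $p_j^{\,n_j-1}$ and $2p_j^{\,n_j-1}$ respectively, and $\hat B,\hat D$ with first coordinate $0$ and $j$-th coordinate $p_j$ and $2p_j$ respectively (all other coordinates $0$). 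Then $\hat A\not\sim\hat C$ (their first coordinates multiply to $1\ne 0$ in $\mathbb{Z}_2$) and $\hat B\not\sim\hat D$ (their $j$-th coordinates multiply to $2p_j^2\ne 0$ in $\mathbb{Z}_{p_j^{n_j}}$, which is where $n_j\ge 3$ is used), while the other four pairs are edges. Adjoining the $a-2$ ordinary gadgets at the primes $p_i$ with $i\in\{2,\dots,a\}\setminus\{j\}$ — each of whose vertices is adjacent to all of $\hat A,\hat B,\hat C,\hat D$ since those vanish in coordinate $i$ — yields an induced subgraph isomorphic to $\overline{(a-2)K_2}\vee\overline{2K_2}=\overline{aK_2}$, whence $box(\Gamma(\mathbb{Z}_N))\ge a$.

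The hard part is spotting the construction in part (2): recognising that coordinate $1$ alone cannot supply a gadget but coordinates $1$ and $j$ jointly can supply two, and that the resulting $C_4$ is genuinely induced and interacts correctly with the remaining gadgets. The precise reason $n_j\ge 3$ (and not merely $n_j\ge 2$) is needed reduces to an elementary feasibility question: one needs $j$-th-coordinate valuations $t_A,t_B,t_C,t_D\in\{0,\dots,n_j-1\}$ with $t_A+t_B,\ t_B+t_C,\ t_C+t_D,\ t_D+t_A$ all $\ge n_j$ while $t_B+t_D\le n_j-1$, and this system is solvable iff $n_j\ge 3$ (for instance $t_A=t_C=n_j-1$, $t_B=t_D=1$). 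Everything else — verifying the gadget adjacencies, distinctness of the $2a$ vertices, and that all of them are genuine zero divisors — is a routine coordinatewise computation.
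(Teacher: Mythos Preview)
Your argument is correct and follows the paper's strategy of exhibiting an induced Roberts' graph; your CRT viewpoint makes the paper's sets $T_i=\{N/p_i^{n_i},\,2N/p_i^{n_i}\}$ transparent (they are precisely pairs of vectors supported only at coordinate~$i$), and cases (1) and (3) are essentially identical to the paper's.

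In case (2) the constructions diverge. The paper pairs $T''=\{N/p_k^{\,n_k-1},\,2N/p_k^{\,n_k-1}\}$ --- which in CRT is your $\{\hat B,\hat D\}$ up to a unit --- with $T'=\{N/2,\,2N/(2p_k)\}=\{N/2,\,N/p_k\}$, whereas you take $\hat A,\hat C$ both carrying a $1$ in the first coordinate and $p_j^{\,n_j-1},\,2p_j^{\,n_j-1}$ in the $j$-th. This difference matters: the paper's $T'$ is in fact \emph{not} an independent pair, since $(N/2)(N/p_k)=N\cdot N/(2p_k)\equiv 0\pmod N$ (in CRT, one factor vanishes in the first coordinate and the other vanishes everywhere else), so the paper's union $T'\cup T''\cup\bigcup_{i\ne 1,k}T_i$ does not induce $\overline{aK_2}$ as claimed. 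Your $\{\hat A,\hat C\}$ repairs exactly this defect: both have first coordinate $1$, forcing $\hat A\cdot\hat C\ne 0$ in $\mathbb{Z}_2$, while their high $p_j$-valuation $n_j-1$ keeps all four cross-products with $\hat B,\hat D$ equal to zero. So your case~(2) is not merely a reformulation but a correct replacement for the paper's flawed choice of $T'$.
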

     \begin{proof}
    We will show that the Robert's graph  $\overline{aK_2}$ is present as an induced subgraph in $\gz$ for cases $1$ and $2$ and the Robert's graph $\overline{(a-1)K_2}$ is present for case $3$. Then by observation \ref{sub} we will get our required lower bounds. 
    \begin{enumerate}
     \item [(1)] \underline{When $N\not\equiv2\pmod 4$:} \\[5pt]
     In this case, one of the following is true: (i)$N$ is not divisible by $2$ or (ii) $p_1=2$ and $n_1\geq 2$. \\ For every $1\leq i\leq a$, define $T_i:=\left\{\frac{N}{p_i^{n_i}},\frac{2N}{p_i^{n_i}}\right\}$. It is clear that $T_i\subset [N]$ since $\frac{2N}{p_i^{n_i}}< N$. Observe that $T_i$ is an independent set as an induced subgraph of $\gz$. Also, it is easy to see that for $i\neq j$, $T_i\cup T_j$ form the complete bipartite graph $K_{2,2}$. Therefore,  
     $$\t{}\bigcup_{i=1}^aT_i\t{ induces the Robert's graph $\overline{aK_2}$ in $\gz$} $$
     \item [(2)] \underline{When $N\equiv 2\pmod 4$ and $n_j\geq 3$ for some $2\leq j\leq a$:}\\[5pt]
     In this case, $p_1=2$ and $n_1=1$. Let $k$ be the integer such that $n_k\geq 3$. Define $T':=\left\{\frac{N}{2},\frac{2N}{2p_k}\right\}$ and $T'':=\left\{\frac{N}{p_k^{n_k-1}},\frac{2N}{p_k^{n_k-1}}\right\}$. It is clear that $T',T''$ and $T_i$(as defined above) are all subsets of $[N]$ as both $\frac{2N}{p_k^{n_k-1}}$ and $\frac{2N}{p_i^{n_i}}$ are smaller than $N$. $T'$ and $T_i$ are independent sets in $\gz$ as above. The fact that $T''$ is independent follows from observing that $\frac{N}{p_k^{n_k-1}}\cdot\frac{2N}{p_k^{n_k-1}}=\frac{N^2}{{p_k^{2n_k-2}}}$ contains $p_k^2$ as the highest power of $p_k$. Also, for $i\neq 1,k$ the sets $T'\cup T'', T'\cup T_i, T''\cup T_i$ form complete bipartite graphs $K_{2,2}$. Therefore, 
     $$\t{}T' \cup T'' \cup\bigcup_{\substack{ i=2\\i\neq j}}^aT_i\t{ induces the Robert's graph $\overline{aK_2}$ in $\gz$.}$$
     \item [(3)] \underline{When $N\equiv 2\pmod 4 \t{ and } n_i\leq 2 \text{ for }2\leq i\leq a$:}\\[5pt] 
     In this case, $p_1=2, n_1=1.$ Then,
     $$\bigcup_{i=2}^aT_i\t{ induces the Robert's graph $\overline{(a-1)K_2}$ in $\gz$.} $$
    
     \end{enumerate}
     \end{proof}
     We now provide upper bounds. 
     \begin{lemma}\label{thmgz}
    Let $N=\Pi_{i=1}^a p_i^{n_i}$ where $n_i>0$ and assume $p_j<p_k$ for $j<k$. Then  $$box(\Gamma(\mathbb{Z}_N))\leq \dimt({\gz})\leq a
   $$
\end{lemma}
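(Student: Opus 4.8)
### Proof Proposal

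The plan is to exhibit an explicit family of $a$ threshold graphs $T_1, \dots, T_a$, each a supergraph of $\gz$ on the same vertex set, whose edge sets intersect exactly in $E(\gz)$; by Theorem~\ref{Roberts} (and the fact that $TH \subset INT$) this simultaneously gives $box(\gz) \le \dimt(\gz) \le a$. The natural coordinate to use is the prime $p_i$: for a vertex $x \in Z(\mathbb{Z}_N)$, the relevant quantity is $f(p_i, x)$, the exponent of $p_i$ in $\gcd(x, N)$. Recall that $x \cdot y \equiv 0 \pmod N$ if and only if for \emph{every} $i$ we have $f(p_i, x) + f(p_i, y) \ge n_i$. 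So non-adjacency in $\gz$ means there is \emph{at least one} coordinate $i$ where $f(p_i,x) + f(p_i,y) < n_i$. This is exactly the shape of an intersection-of-$a$-threshold-graphs decomposition: let $T_i$ be the graph where $x \sim y$ iff $f(p_i,x) + f(p_i,y) \ge n_i$. By Definition~\ref{thresholddef}, $T_i$ is a threshold graph with weights $a_x := f(p_i, x)$ and threshold $S := n_i$ (it is a genuine threshold graph once we check it contains no isolated-edge obstruction, but the weight characterization is immediate). Then $E(T_1) \cap \dots \cap E(T_a) = \{xy : \forall i,\ f(p_i,x)+f(p_i,y) \ge n_i\} = E(\gz)$, and each $T_i \supseteq \gz$.

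The main steps, in order: (1) State the CRT/divisibility fact that $xy \equiv 0 \pmod N \iff \forall i,\ f(p_i,x)+f(p_i,y) \ge n_i$, which follows from unique factorization. (2) Define $T_i$ on vertex set $Z(\mathbb{Z}_N)$ by the weight function $x \mapsto f(p_i,x)$ and threshold $n_i$, and invoke Definition~\ref{thresholddef} to conclude $T_i \in TH$. (3) Check $\gz \subseteq T_i$: if $xy \in E(\gz)$ then in particular $f(p_i,x)+f(p_i,y) \ge n_i$, so $xy \in E(T_i)$. (4) Check $\bigcap_i E(T_i) \subseteq E(\gz)$: if $xy \in E(T_i)$ for all $i$, then $f(p_i,x)+f(p_i,y)\ge n_i$ for all $i$, hence $N \mid xy$; also $x \ne y$ is inherited since each $T_i$ is simple. (5) Conclude by Theorem~\ref{Roberts} that $box(\gz) \le a$, and since $TH \subset INT$ the same decomposition witnesses $\dimt(\gz) \le a$.

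One technical wrinkle to handle carefully: self-loops and the distinctness of $x$ and $y$. The definition of $\gz$ only includes edges between \emph{distinct} zero divisors, whereas the weight inequality $f(p_i,x)+f(p_i,x) \ge n_i$ could hold for a single vertex $x$ (e.g. when $x^2 \equiv 0$). Since $T_i$ is defined as a simple graph we simply never add loops, so this is not actually an obstruction, but the writeup should note that the condition "$x \ne y$" is preserved under intersection and is already built into each $T_i$. A second minor point: one should confirm that $f(p_i, \cdot)$ is well-defined on representatives in $[N]$ and that $f(p_i,x) = f(p_i, \gcd(x,N))$, so the weights only depend on the residue class — immediate from the definition of $f$.

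I expect the only real "obstacle" to be bookkeeping rather than mathematics: verifying that the weight-based graphs $T_i$ are honestly threshold graphs per Definition~\ref{thresholddef} (rather than just "some graph defined by a threshold-like rule"), and making the distinctness/loop issue airtight. Everything else is a direct translation of the divisibility condition $N \mid xy \iff \bigwedge_i (f(p_i,x)+f(p_i,y) \ge n_i)$ into Roberts' intersection characterization of boxicity, using that each conjunct is realized by a threshold (hence interval) supergraph on the common vertex set $Z(\mathbb{Z}_N)$.
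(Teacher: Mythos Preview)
Your proposal is correct and matches the paper's approach: both define the $i$-th supergraph by the threshold rule $a_x = f(p_i,x)$, $S = n_i$, and verify $\bigcap_i E(T_i) = E(\gz)$; the paper merely takes a detour through an explicit interval representation before remarking that each $I_i$ is threshold via Definition~\ref{thresholddef}, whereas you invoke that definition directly. One small correction: your aside that $f(p_i,x) = f(p_i,\gcd(x,N))$ is not literally true (e.g.\ $N=12$, $x=8$, $p_i=2$ gives $3$ versus $2$), but this is harmless since the capped weight $\min(f(p_i,x),n_i)=f(p_i,\gcd(x,N))$ yields the same threshold condition $a_x+a_y\ge n_i$.
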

\begin{proof}  Define $g(k):=\frac{k}{2N}$ $\forall k\in [N]$. \\[5pt] For $1\leq i\leq a$, $0\leq j\leq n_i$, let $F_{i,j}=\{u\in[N]: f(p_i,\t{gcd}(u,N))=j\}$. Note that $f(p_i,u)=f(p_i,\gcd(u,N))$ when $f(p_i,u)\leq n_i$. When $f(p_i,u)> n_i$, then $f(p_i,\gcd(u,N)
)={n_i}$. It is easy to see that $\bigsqcup_{j=0}^{n_i} F_{i,j}=[N]$.  \\[5pt]
    
    Define interval graphs $I_i$ on the vertex set $V(\gz)$ as follows: $$I_{i}(v)=
     \begin{cases}
         [j+g(v),j+g(v)] & \text{ for } v\in F_{i,j} \t{ and }j<\ceil{\frac{n_i}{2}}\\[5pt]
         [n_i-j,n_i] & \text{ for } v\in F_{i,j} \t{ and } j\geq \ceil{\frac{n_i}{2}} \\[5pt] 
     \end{cases}$$
     Note that $[j+g(v),j+g(v)]$ is a point interval.\\  Fix $1\leq i\leq a$. Take distinct numbers $u,v\in [N]$ and let $f(p_i,\t{gcd}(u,N))=j_u$ and $f(p_i,\t{gcd}(v,N))=j_v$. We make the following claims:\\
     \begin{claim}1\label{claim1}
         $j_u+g(u)\neq j_v+g(v)$ 
     \end{claim}\\
     \begin{claimproof}
         If $j_u> j_v$, then $j_u-j_v\geq 1>\frac{1}{2}\geq g(v)-g(u).$ The case $j_v>j_u$ is symmetrical.\\
         If $j_u=j_v$, then $j_u-j_v=0\neq g(v)-g(u)$ as $g(u)\neq g(v)$ by definition. 
     \end{claimproof}\\
     \begin{claim}2\label{claim2}
         $I_i(u)$ and $I_i(v)$ intersect if and only if $j_u+j_v\geq n_i$.
     \end{claim}\\
     \begin{claimproof}\textbf{`If' part of the claim:}
        Suppose $j_u+j_v\geq n_i$. At most one of $j_u$ or $j_v$ can be strictly smaller than $\ceil{\frac{n_i}{2}}$. \begin{itemize}
            \item { If $j_u\geq \ceil{\frac{n_i}{2}}$ and $j_v\geq \ceil{\frac{n_i}{2}}$:} In this case, $I(u)=[n_i-j_u,n_i]$ and $I(v)=[n_i-j_v,n_i]$. Both these intervals contain $n_i$ in their intersection. 
            \item {{Now without loss of generality, let $j_u<\ceil{\frac{n_i}{2}}$ and $j_v\geq \ceil{\frac{n_i}{2}}$:}} In this case, $I(u)=[j_u+g(u),j_u+g(u)]$ and $I(v)=[n_i-j_v,n_i]$. By assumption $j_u+j_v\geq n_i$. Therefore, $n_i\geq j_u+g(u)\geq n_i-j_v+g(u)\geq n_i-j_v$. So the intervals intersect.
        \end{itemize} This proves the `if' part. \\ 
        {\bf `Only if' part of the claim:} Suppose $j_u+j_v<n_i$. Then at least one of $j_u$ or $j_v$ is strictly smaller than $\ceil{\frac{n_i}{2}}$. \begin{itemize}
            \item {If $j_u< \ceil{\frac{n_i}{2}}$ and $j_v< \ceil{\frac{n_i}{2}}$:} In this case, $I(u)=[j_u+g(u),j_u+g(u)]$ and $I(v)=[j_v+g(v),j_v+g(v)]$. By claim 1, the intervals do not intersect.
            \item {Now, without loss of generality let $j_u<\ceil{\frac{n_i}{2}}$ and $j_v\geq \ceil{\frac{n_i}{2}}$:} In this case, $I(u)=[j_u+g(u),j_u+g(u)]$ and $I(v)=[n_i-j_v,n_i]$. By assumption $j_u+j_v\leq  n_i-1$. Therefore, $j_u+g(u)\leq n_i-j_v+g(u)-1< n_i-j_v$. So the intervals do not intersect.
        \end{itemize}
        This finishes the proof of claim 2. 
     \end{claimproof}
     \begin{remark}
         Recall the definition of threshold graphs (\ref{thresholddef}). We claim that each $I_i$ defined as above is a threshold graph. To see this, in definition \ref{thresholddef}, take $S=n_i$ and $a_u=j_u$ for each vertex $u$. By claim 2 $I_i$ satisfies the definition of threshold graphs.   
     \end{remark}
     By the above two claims, it can be shown that $E(\gz)=E(I_1)\cap E(I_2) \cdots \cap E(I_a)$. Indeed, if $u,v\in [N]$ are adjacent in $\gz$ then $N$ divides $uv$. Consequently, for each $1\leq i\leq a$, $p_i^{n_i}$ divides $uv$. This means that $f(p_i,\t{gcd}(u,N))+f(p_i,\t{gcd}(v,N))\geq n_i$ for all $1\leq i\leq a$. It follows by claim 2 that $u$ and $v$ are adjacent in $I_i$ for all $1\leq i\leq a$. Therefore, the threshold graphs $I_i$ are all supergraphs of $\gz$. On the other hand, if $u$ and $v$ are not adjacent in $\gz$, then $\exists$ $i$ such that $f(p_i,\t{gcd}(u,N))+f(p_i,\t{gcd}(v,N))<n_i$. Again, by claim 2, $u$ and $v$ are not adjacent in $I_i$. Hence, we have showed that $E(\gz)=E(I_1)\cap E(I_2) \cdots \cap E(I_a)$. The result follows. \\
     \end{proof}
     Note that the upper bound for boxicity given by the lemma \ref{thmgz} matches with the lower bounds given by \ref{lower} except for the case when $N\equiv 2\pmod4$ and all prime divisors $p_i\neq 2$ have $n_i\leq 2$. But now we show that in this case, the upper bound can be improved to match the corresponding lower bound.    
     \begin{lemma}\label{boximprov}
         Let $N= \Pi_{i=1}^{a} p_i^{n_i}$ be the prime factorization of $N$ with $p_j<p_k$ for $j<k$ where $n_i\leq 2 \text{ for }2\leq  i\leq a$, $p_1=2$ and $n_1=1$.  Then  $$box(\Gamma(\mathbb{Z}_N))\leq a-1$$
     \end{lemma}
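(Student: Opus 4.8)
The plan is to invoke Roberts' theorem (Theorem~\ref{Roberts}) and produce $a-1$ interval supergraphs of $\Gamma(\mathbb{Z}_N)$ whose edge‑intersection is $\Gamma(\mathbb{Z}_N)$. The natural starting point is the family $I_1,\dots,I_a$ from Lemma~\ref{thmgz}. Since $n_1=1$, the graph $I_1$ is the complete split graph in which the odd zero divisors form the independent side and the even zero divisors a universal clique. Writing $j^i_u=f\big(p_i,\gcd(u,N)\big)$, Claim 2 shows that $\{u,v\}\in E(I_i)$ iff $j^i_u+j^i_v\ge n_i$, and an elementary check gives $\bigcap_{i=2}^{a}E(I_i)=\{\{u,v\}:(N/2)\mid uv\}$. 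As $N/2$ is odd, such a pair fails to be an edge of $\Gamma(\mathbb{Z}_N)$ exactly when $u,v$ are both odd. Hence $E\big(\bigcap_{i=2}^{a}I_i\big)\setminus E(\Gamma(\mathbb{Z}_N))$ is precisely the set of \emph{surplus} edges $\{u,v\}$ with $u,v$ odd and $(N/2)\mid uv$; these all lie inside the independent set $O$ of odd vertices. So it suffices to modify the $a-1$ interval graphs $I_2,\dots,I_a$ — keeping each an interval supergraph of $\Gamma(\mathbb{Z}_N)$ — so that every surplus edge is deleted from at least one of them, since then the intersection of the modified graphs drops exactly to $\Gamma(\mathbb{Z}_N)$.

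The next step is to analyse the surplus edges. A surplus edge is a pair of odd vertices with $j^i_u+j^i_v\ge n_i$ for every $i\ge 2$, and the hypothesis $n_i\le 2$ forces a rigid local picture: whenever $n_i=1$ one of $u,v$ is divisible by $p_i$, and whenever $n_i=2$ either one of $u,v$ is divisible by $p_i^{2}$ or both are divisible by $p_i$. In particular $N/2$, being divisible by every $p_i^{n_i}$, forms a surplus edge with \emph{every} other odd vertex, while for a generic odd vertex the surplus partners are tightly pinned down by its divisibility vector. I would use this to charge each surplus edge to one coordinate $i\in\{2,\dots,a\}$ and, on $I_i$, carry out a bounded local surgery: delete the surplus edges charged to $i$ (they sit inside or between the weight‑one and weight‑two classes $F_{i,1},F_{i,2}$), and — where a pure deletion would create an induced $C_4$ because $F_{i,0}$ has at least two vertices — replace the point intervals of the relevant weight‑zero odd vertices by short nested intervals, using the slack that $n_i\le 2$ leaves inside $[0,n_i]$. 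The vertex $N/2$ and the peripheral vertices (the even units, which are adjacent only to $N/2$) are inserted at the end, and the ``core'' is organised around the fact that $\Gamma(\mathbb{Z}_N)$ restricted to the remaining odd and even vertices is a controlled doubling of $\Gamma(\mathbb{Z}_{N/2})$, whose boxicity is at most $a-1$ by Lemma~\ref{thmgz} applied to $N/2$ (which has $a-1$ prime divisors).

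The main obstacle will be verifying that each surgically modified graph is still an interval graph, i.e.\ that the edge deletions together with the replacement of point intervals create neither an induced $C_4$ nor an asteroidal triple. This is exactly where the hypotheses $n_1=1$ and $n_i\le 2$ for $i\ge 2$ are indispensable: with a larger exponent the surplus‑edge structure becomes too rich to be absorbed into only $a-1$ interval graphs — which is why Lemma~\ref{thmgz} only yields the bound $a$ in general, and why the boxicity equals $a$ in all the remaining cases. Together with the lower bound $box(\Gamma(\mathbb{Z}_N))\ge a-1$ from Lemma~\ref{lower}(3), this construction gives $box(\Gamma(\mathbb{Z}_N))=a-1$ in the stated case.
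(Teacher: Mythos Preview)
Your opening analysis is correct and coincides with the paper's implicit starting point: dropping $I_1$ from the family of Lemma~\ref{thmgz} leaves $\bigcap_{i=2}^{a}E(I_i)=\{\{u,v\}:(N/2)\mid uv\}$, so the excess over $E(\Gamma(\mathbb{Z}_N))$ consists precisely of the pairs of odd zero divisors whose product is divisible by $N/2$. From here on, however, what you present is a plan rather than a proof. The charging rule is never defined; the ``local surgery'' on $I_i$ is not described; and the step you yourself call ``the main obstacle'' --- showing the surgically modified graphs are still interval supergraphs of $\Gamma(\mathbb{Z}_N)$ --- is left entirely open. The proposed patch (replacing point intervals of weight-zero odd vertices by ``short nested intervals'') would \emph{add} adjacencies among $F_{i,0}$, and since any two vertices of $F_{i,0}$ are non-adjacent in $\Gamma(\mathbb{Z}_N)$, those new edges would themselves have to be killed in some other $I_{i'}$; nothing in the proposal guarantees this. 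The appeal to $\Gamma(\mathbb{Z}_{N/2})$ is suggestive but is not turned into an argument.

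The paper avoids the charging/surgery route altogether. Instead of modifying the old $I_i$, it rebuilds each $I_i$ for $2\le i\le a$ so that it simultaneously encodes the $p_i$-weight $j_u=f(p_i,\gcd(u,N))$ \emph{and} the parity $k_u=f(2,\gcd(u,N))$. With the partition $F_{i,j,k}=\{u:j_u=j,\;k_u=k\}$, explicit intervals are written down (one table for $n_i=1$, another for $n_i=2$) so that distinct $u,v$ are adjacent in $I_i$ iff either $(j_u+j_v\ge n_i$ and $k_u+k_v\ge 1)$ or $u,v\in F_{i,n_i,0}$. Thus two odd vertices are adjacent in every $I_i$ only if both are divisible by $p_i^{n_i}$ for every $i\ge 2$, forcing $u=v=N/2$; hence no surplus edge survives. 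Because the intervals are given explicitly, the interval-graph property is immediate and the obstacle you flag never arises. This is the missing idea in your proposal: absorb the parity constraint uniformly into \emph{each} of the $a-1$ coordinates rather than trying to distribute the surplus edges among them.
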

     \begin{proof} For $2\leq i\leq a; $\hspace{5pt}$ 0\leq j\leq 2;$ $ 0\leq k\leq 1$, let $$F_{i,j,k}=\{u\in[N]: f(p_i,\t{gcd}(u,N))=j  \t{ and }f(2,\t{gcd}(u,N))=k\}.$$ Define $A=\{i\in [a]\setminus \{1\}: n_i=1\}$ and $B=\{i\in [a]\setminus \{1\}: n_i=2\}$. We are removing $1$ from the set $[a]$ because $p_1=2$ and we will not construct any interval graph corresponding to the prime factor $2$. Note that this forces the indexing of the interval graphs to begin from $2$ instead of $1$, i.e. $I_2, I_3, \cdots I_a.$ \\[5pt]
     Fix $2\leq i\leq a$. For distinct numbers $u,v\in \gz$, define $j_u=f(p_i,\t{gcd}(u,N))$ and $k_u=f(2,\t{gcd}(u,N))$. Similarly define $j_v$ and $k_v$. We will construct interval graphs $I_i$ on $V(\gz)$ as follows:
     \begin{enumerate}
         \item For $i\in A$, we construct an interval graph $I_i$ that satisfies the following: for every pair of distinct vertices $u,v\in V(\gz)$, we have $I_i(u)\cap I_i(v)\neq \emptyset$ if and only if one of the following is true: (a)$j_u+j_v\geq 1$ and $k_u+k_v\geq 1$ or (b) $u,v\in F_{i,1,0}$. 
         \item For $i\in B$, we construct an interval graph $I_i$ that satisfies the following: for every pair of distinct vertices $u,v\in V(\gz)$, we have $I_i(u)\cap I_i(v)\neq \emptyset$ if and only if one of the following is true: (a)$j_u+j_v\geq 2$ and $k_u+k_v\geq 1$ or (b) $u,v\in F_{i,2,0}$. 
     \end{enumerate} 
     \begin{center}
     \begin{tabular}[t]{|c|c|}
    \hline
        \begin{tikzpicture}[>=stealth,scale=1.5]
  \tikzstyle{filled} = [draw, circle, fill=black, minimum size=12pt, inner sep=0pt]
  \tikzstyle{unfilled} = [draw, circle, fill=white, minimum size=12pt, inner sep=0pt]

  \node[unfilled,label=above:{$F_i(0,0)$}] (v00) at (0,1) {};
  \node[unfilled,label=above:{$F_i(0,1)$}] (v01) at (1,1) {};
  \node[filled,label=below:{$F_i(1,0)$}] (v10) at (0,0) {};
  \node[filled,label=below:{$F_i(1,1)$}] (v11) at (1,0) {};

  \draw (v10) -- (v11);
  \draw (v10) -- (v01);
  \draw (v11) -- (v00);
  \draw (v11) -- (v01);
\end{tikzpicture}
 &
\begin{tikzpicture}[x=1cm,y=1cm]
\draw[line width=2pt] (6,3.125) -- (10,3.125) node[midway,below]{$F_i(1,1)$} ;
  
  \draw[line width=2pt] (8,3.5) --(10,3.5) node[right] {$F_i(1,0)$};

  \draw[line width=0.5pt,dashed] (8,3.875)  -- (9,3.875) node[midway,above] {$F_i(0,1)$};
  \draw[line width=0.5pt,dashed] (6,3.5)  -- (7,3.5) node[midway,above] {$F_i(0,0)$};
  \end{tikzpicture}
  \\[30pt]
 \hline 
\begin{tikzpicture}[>=stealth,scale=1.5]
  \tikzstyle{filled} = [draw, circle, fill=black, minimum size=12pt, inner sep=0pt]
  \tikzstyle{unfilled} = [draw, circle, fill=white, minimum size=12pt, inner sep=0pt]

  \node[filled,label=above:{$F_i(2,0)$}] (v20) at (0,1) {};
  \node[filled,label=above:{$F_i(1,1)$}] (v11) at (1,1) {};
  \node[unfilled,label=left:{$F_i(0,1)$}] (v01) at (-0.1,0.5) {};
  \node[unfilled,label=right:{$F_i(1,0)$}] (v10) at (1.1,0.5) {};
  \node[filled,label=below:{$F_i(2,1)$}] (v21) at (0.5,0) {};
  \node[unfilled,label=below:{$F_i(0,0)$}] (v00) at (1.8,0) {};

  \draw (v20) -- (v11);
  \draw (v20) -- (v01);
  \draw (v21) -- (v20);
  \draw (v21) -- (v11);
    \draw (v10) -- (v11);
  \draw (v21) -- (v01);
  \draw (v21) -- (v10);
  \draw (v21) -- (v00);
 
\end{tikzpicture}
 & 
 \hspace{20pt}
\begin{tikzpicture}

 \draw[line width=2pt] (5,2.125) -- (11,2.125) node[below]{{$F_i(2,1)$}};
  \draw[line width=2pt] (8.1,2.5) -- (11,2.5)node[right]{$F_i(1,1)$};
  \draw[line width=2pt] (7,1.75) -- (9,1.75)node[midway,below]{$F_i(2,0)$};

  \draw[line width=0.5pt,dashed] (5,2.5)  -- (6,2.5) node[midway,above]{$F_i(0,0$)};
   \draw[line width=0.5pt,dashed] (9.1,2.875)  -- (10.1,2.875)node[midway,above]{$F_i(1,0)$}; 
   \draw[line width=0.5pt,dashed] (7,2.5)  -- (8,2.5)node[midway,above]{$F_i(0,1)$};

\end{tikzpicture}
\hspace{20pt}
\\[30pt]
 \hline 
\end{tabular}
\end{center}

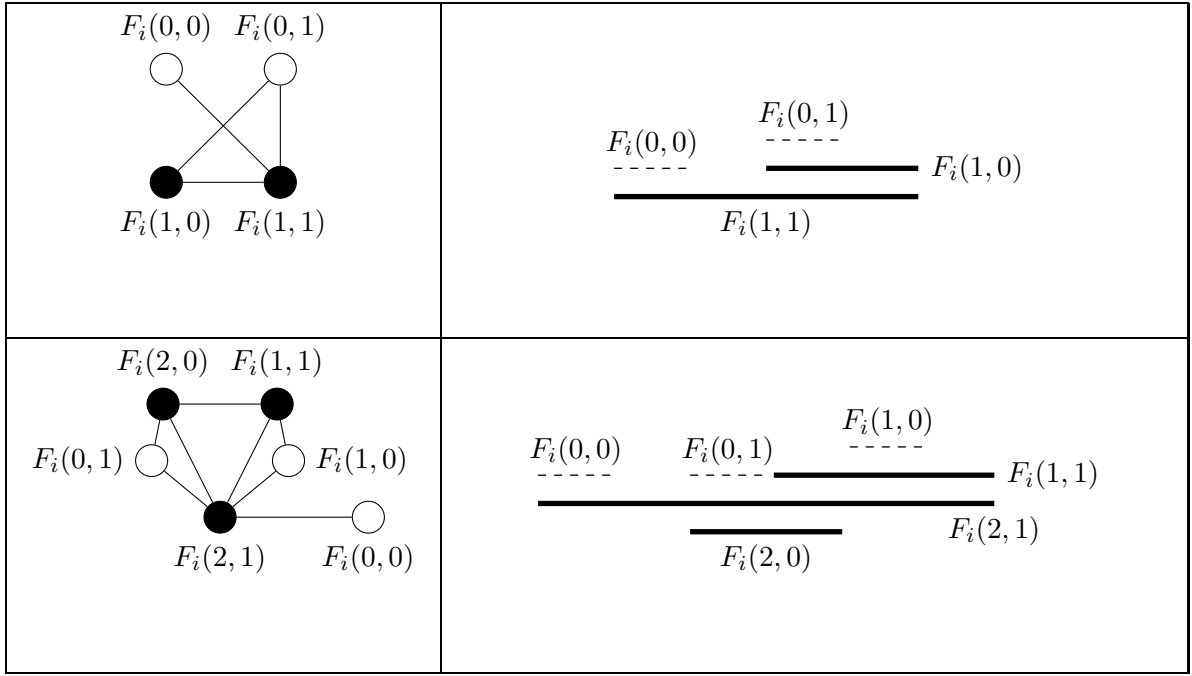
\captionof{figure}{\footnotesize{A schematic diagram of the graphs $I_i$ for $i\in A$(top) and $i\in B$(bottom). Each node represents a group of vertices $F_i(j,k)$. Edges are present between two nodes if and only if every vertex in one node is adjacent to every vertex in the other node. The hollow nodes indicate that the corresponding group of vertices form an independent set and solid nodes indicate cliques. The edges inside the cliques and the edges going between two nodes account for all the edges of the graph $I_i$. On the right, the corresponding intervals are illustrated for each vertex. The intervals corresponding to vertices of hollow nodes are drawn as small disjoint intervals with thin lines and the intervals corresponding to solid nodes are drawn long and thick. Note that vertices of a particular solid node is mapped to the same long interval(hence the use of thick lines).}}

    \vspace{10pt}
    These interval graphs are best described visually(see figure 1) but we also describe their mappings explicitly for completeness.\\
     [5pt] 
    \noindent 
    Define $g(k):=\frac{k}{2N}$.  \\[5pt]
    For $i\in A$ consider the interval graphs $$I_{i}(v)=
     \begin{cases}
         [g(v),g(v)] & \text{ for } v\in F_{i,0,0} \\[5pt]
         [1+g(v),1+g(v)] & \text{ for } v\in F_{i,0,1} \\[5pt]
         [0,2] & \text{ for } v\in F_{i,1,1} \\[5pt]
         [1,2] & \text{ for } v\in F_{i,1,0} \\[5pt]
     \end{cases}$$
     and for $i\in B$ consider the interval graphs $$I_{i}(v)=
     \begin{cases}
         [g(v),g(v)] & \text{ for } v\in F_{i,0,0} \\[5pt]
         [1+g(v),1+g(v)] & \text{ for } v\in F_{i,0,1} \\[5pt]
         [2+g(v),2+g(v)] & \text{ for } v\in F_{i,1,0} \\[5pt]
         [1,2] & \text{ for } v\in F_{i,2,0} \\[5pt]
         [1.5,3] & \text{ for } v\in F_{i,1,1} \\[5pt]
         [0,3] & \text{ for } v\in F_{i,2,1} \\[5pt]
     \end{cases}$$
      Note that the graphs so constructed are indexed as $I_2,I_3, \cdots, I_a$ as discussed above.
    
   It can be easily verified by inspection that these graphs satisfy the requirements $1$ and $2$.\\
We will show that $E(\gz)= E(I_2) \cdots \cap E(I_a)$. Indeed if distinct vertices $u,v\in V(\gz)$ are adjacent in $\gz$ then $N$ divides $uv$. Consequently, for each $1\leq i\leq a$, $p_i^{n_i}$ divides $uv$. It follows that $f(p_i,\gcd(u,N))+f(p_i,\t{gcd}(v,N))\geq n_i$ for every $1\leq i\leq a$. By conditions $1(a)$ and $2(a)$ above, $u$ and $v$ are adjacent in each $I_i$. Therefore, these graphs are supergraphs of $\gz$.\\
Now suppose $u,v\in V(\gz)$ be distinct vertices that are not adjacent. Then there is a prime factor $p_i$ of $N$ for which \( f(p_i,\t{gcd}(u,N))+ f(p_i,\t{gcd}(v,N))<n_i\). 
\begin{itemize}
    \item Suppose $p_i\neq 2$, i.e. $i\neq 1$. 
    \begin{itemize}
        \item If $i\in A$, then \( f(p_i,\t{gcd}(u,N))+ f(p_i,\t{gcd}(v,N))<1\), i.e. $j_u+j_v<1$. It follows that both $1(a)$ and $1(b)$ are violated for $I_i$. By construction, $u,v$ are not adjacent in $I_i$.
        \item If $i\in B$, then \( f(p_i,\t{gcd}(u,N))+ f(p_i,\t{gcd}(v,N))<2\), i.e. $j_u+j_v<2$. It follows that both $2(a)$ and $2(b)$ are violated for $I_i$. By construction, $u,v$ are not adjacent in $I_i$.
    \end{itemize}
    \item Suppose $p_i= 2$, i.e. $i=1$. Here $k_u+k_v<1$. Then condition $1(a)$ or $2(a)$ is violated for each interval graph $I_l$ depending on whether $l\in A$ or $l\in B$. If for any $l\in A$, $1(b)$ is violated by $u,v$ in $I_l$ then $u,v$ would not be adjacent in $I_l$. Similarly, if for any $l\in B$, $2(b)$ is violated by $u,v $ in $I_l$ then $u,v$ would not be adjacent in $I_l$. Therefore, we can assume that $u,v\in F_l(1,0)$ for all $l\in A$ and $u,v\in F_l(2,0)$ for all $l\in B$. This means that for $x=u,v$, we have $f(p_l,\gcd(x,N))=1$ when $l\in A$ and $f(p_l,\gcd(x,N))=2$ when $l\in B$. It follows that 
    $$u=k_1\cdot \prod_{l\in A}^ap_l\prod_{l'\in B}^ap_{l'}^2 \t{ and }v=k_2\cdot \prod_{l\in A}^ap_l\prod_{l'\in B}^ap_{l'}^2 \t{ for some $k_1,k_2\in \mathbb{Z}^+$}$$
    Recalling that $n_l=1$ for $l\in A$ and $n_l=2$ for $l\in B$, we get: $$u=k_1\cdot \prod_{l=2}^ap_l^{n_l} \t{ and }v=k_2\cdot \prod_{l=2}^ap_l^{n_l} \t{ where $k_1,k_2\in \mathbb{Z}^+$}.$$ Since $u$ and $v$ are smaller than $N$, we must have $k_1<2$ and $k_2<2$. This forces $k_1=k_2=1$ which contradicts the fact that $u$ and $v$ are distinct numbers. Hence, $\exists $ $l$ such that $u,v$ are not adjacent in $I_l$.
\end{itemize}
  
   It follows that $E(\gz)=E(I_2)\cap E(I_3) \cdots \cap E(I_a)$ as claimed. (Note that the index starts from $2$). Hence, by theorem \ref{Roberts}, $box(\gz)\leq a-1$. 
\end{proof}
The following theorem follows directly as a consequence of lemmas \ref{lower},\ref{thmgz}, \ref{boximprov}. 
\begin{theorem}\label{box}
    Suppose $N=\Pi_{i=1}^a p_i^{n_i}$, where $n_i>0$ and $a\geq 2$, be the prime factorization of $N$ with $p_1<p_2< \cdots < p_a$. Then, $$box(\Gamma(\mathbb{Z}_N))=\begin{cases}
        a-1 & \text{if } N\equiv 2\pmod 4 \t{ and } n_i\leq 2 \text{ for }2\leq i\leq a\\
        a & \text{otherwise }
    \end{cases}$$
\end{theorem}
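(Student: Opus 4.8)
The plan is to derive Theorem~\ref{box} by playing the lower bounds of Lemma~\ref{lower} against the upper bounds of Lemmas~\ref{thmgz} and~\ref{boximprov}, organised as a case split on the $2$-adic behaviour of $N$.

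\emph{Case A: $N \equiv 2 \pmod 4$ and $n_i \le 2$ for every $2 \le i \le a$.} This is exactly the hypothesis of Lemma~\ref{lower}(3) and of Lemma~\ref{boximprov} (both force $p_1 = 2$, $n_1 = 1$, and every odd prime divisor of $N$ to appear with exponent at most $2$). So I would invoke Lemma~\ref{lower}(3) to get $box(\gz) \ge a-1$ and Lemma~\ref{boximprov} to get $box(\gz) \le a-1$, concluding $box(\gz) = a-1$, which is the first line of the claimed formula.

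\emph{Case B: everything else.} Here I would first observe that Lemma~\ref{thmgz} supplies the upper bound $box(\gz) \le a$ with no hypothesis at all. For the matching lower bound I split once more: if $N \not\equiv 2 \pmod 4$, then Lemma~\ref{lower}(1) gives $box(\gz) \ge a$; if $N \equiv 2 \pmod 4$ but $n_j \ge 3$ for some $2 \le j \le a$, then Lemma~\ref{lower}(2) gives $box(\gz) \ge a$. Since the negation of the Case~A hypothesis (within $a \ge 2$) is precisely ``$N \not\equiv 2 \pmod 4$'' or ``$N \equiv 2 \pmod 4$ and some odd prime divisor has exponent $\ge 3$'', these two subcases exhaust Case~B, and in both we get $box(\gz) = a$.

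The only point that needs care is the bookkeeping: checking that the three hypotheses in Lemma~\ref{lower}, together with the hypothesis of Lemma~\ref{boximprov}, tile the space of possibilities (for $a \ge 2$) in a way compatible with the two-way split in the theorem statement. This is a short logical verification. There is no real obstacle at this stage — all the content sits in the lemmas already proved, and in particular the drop from $a$ to $a-1$ in Case~A rests entirely on the sharpened construction of Lemma~\ref{boximprov}, which realises $\gz$ as an intersection of one fewer interval graph than the generic bound of Lemma~\ref{thmgz}.
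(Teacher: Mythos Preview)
Your proposal is correct and is precisely the argument the paper intends: the paper states that Theorem~\ref{box} ``follows directly as a consequence of lemmas~\ref{lower}, \ref{thmgz}, \ref{boximprov}'' without further detail, and your case split is exactly how one unpacks that sentence. The only content is the logical verification that the three cases of Lemma~\ref{lower} together with the hypothesis of Lemma~\ref{boximprov} partition the possibilities, which you handle correctly.
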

 \begin{remark}
         In \cite{TK}, it was incorrectly shown that $box(\gz)=1$ if and only if $N=p^n$ for $n\geq 3$ and some prime $p$ or $N=2p$ for some odd prime $p$. Only one direction of this statement is true and an easy counterexample for the other direction is when $N=18$. It is not hard to verify that in this case $box(\gz)=1$ but $N$ is of the form $2p^2$ for $p=3$. Our result captures this fact. 
     \end{remark}
The following theorem follows from lemma \ref{thmgz} and the fact that for any graph $G$, we have \\$\dimt(G)\geq box(G)$.
\begin{theorem}\label{th}
    Suppose $N=\Pi_{i=1}^a p_i^{n_i}$ be the prime factorization of $N$. Then,  $$a-1\leq \dimt(\gz)\leq a$$
\end{theorem}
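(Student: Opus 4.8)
The plan is to obtain this as an immediate corollary of Lemma \ref{thmgz} together with the basic inequality $\dimt(G)\geq box(G)$ that holds for every graph $G$ (recorded at the end of the preliminaries, since $TH\subset INT$). No new construction is needed.

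For the upper bound $\dimt(\gz)\leq a$, I would simply point back to the proof of Lemma \ref{thmgz}. There we exhibited graphs $I_1,\dots,I_a$ on the vertex set $V(\gz)$ with $E(\gz)=E(I_1)\cap\cdots\cap E(I_a)$, and the remark following Claim \ref{claim2} checks that each $I_i$ is a threshold graph (threshold $S=n_i$, vertex weights $a_u=f(p_i,\gcd(u,N))$). By Definition \ref{defdim} this is precisely a witness that $\dimt(\gz)\leq a$.

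For the lower bound, I would combine $\dimt(\gz)\geq box(\gz)$ with Lemma \ref{lower} (equivalently, with Theorem \ref{box}): in each of the three cases of Lemma \ref{lower} one has $box(\gz)\geq a-1$ — cases~1 and~2 in fact give $box(\gz)\geq a$ — so $\dimt(\gz)\geq box(\gz)\geq a-1$ in all cases. Putting the two bounds together yields $a-1\leq\dimt(\gz)\leq a$.

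There is no genuine obstacle, but the point worth emphasizing is that, unlike Theorem \ref{box}, this statement leaves a gap of one, and that gap is \emph{not} closed by the argument of Lemma \ref{boximprov}: the improved bound $box(\gz)\leq a-1$ in the case $N\equiv2\pmod4$ was achieved using interval graphs (notably the graphs $I_i$ for $i\in B$) that fail to be threshold graphs, so that construction does not transfer to $\dimt$. Pinning down $\dimt(\gz)$ exactly would require either a threshold-graph analogue of the construction in Lemma \ref{boximprov} or a matching lower bound of $a$ for this residue class; we leave this as an open question.
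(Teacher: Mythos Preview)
Your proposal is correct and matches the paper's own proof: the paper states that Theorem~\ref{th} follows from Lemma~\ref{thmgz} together with the general inequality $\dimt(G)\geq box(G)$, which is exactly what you do (invoking Lemma~\ref{lower} or Theorem~\ref{box} to supply $box(\gz)\geq a-1$). Your closing remark about why Lemma~\ref{boximprov} fails to transfer to $\dimt$ is a nice observation that the paper does not spell out.
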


\begin{corollary}\label{cubi}
   Suppose $N=\Pi_{i=1}^a p_i^{n_i}$ be the prime factorization of $N$ with $p_1<p_2< \cdots < p_a$ and $a\geq 2$. Then $$\frac{\log (\lfloor \sqrt{N/p_1^2}\rfloor-1)}{2}\leq cub(\Gamma(\mathbb{Z}_N))\leq  a\left\lceil{\log N}\right\rceil$$
\end{corollary}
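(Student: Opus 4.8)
The plan is to read the upper bound off from the two results already proved, and to obtain the lower bound by exhibiting a large induced star inside $\gz$. For the upper bound: $\gz$ has $|Z(\Z_N)| < N$ vertices, so $\ceil{\log_2 |V(\gz)|} \le \ceil{\log N}$, and by Theorem~\ref{box} we have $box(\gz) \le a$; feeding both into Theorem~\ref{cub} gives $cub(\gz) \le \ceil{\log_2|V(\gz)|}\cdot box(\gz) \le a\ceil{\log N}$, which is the claimed upper bound.

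For the lower bound, set $m := \lfloor\sqrt{N/p_1^2}\rfloor - 1$; we may assume $m \ge 2$, the smaller cases being trivial. I would show that $v := N/p_1$ together with $\ell_i := i p_1$ for $1 \le i \le m$ induces a copy of $K_{1,m}$ in $\gz$ with centre $v$. These are $m+1$ distinct non-zero elements of $\Z_N$: indeed $\ell_i \le m p_1 < \sqrt{N/p_1^2}\cdot p_1 = \sqrt N < N$, while $\ell_i \ne v$ because either $p_1^2 \mid N$, in which case $\ell_i/p_1 = i < \sqrt{N/p_1^2} \le N/p_1^2 = v/p_1$, or $p_1 \nmid v$, in which case the multiple $\ell_i$ of $p_1$ cannot equal $v$. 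Each is a zero divisor, since $p_1 \mid \ell_i$ and $\gcd(v,N) = N/p_1 > 1$. The centre is adjacent to every leaf, as $v\ell_i = iN \equiv 0 \pmod N$. The essential point is that $\{\ell_1,\dots,\ell_m\}$ is independent: $\ell_i\ell_j = ij\,p_1^2$, and writing $N = p_1^{n_1}\prod_{i\ge 2}p_i^{n_i}$ with the second factor coprime to $p_1$ shows that $N \mid ij\,p_1^2$ is equivalent to $\bigl(N/p_1^{\min(2,n_1)}\bigr) \mid ij$; but $ij \le m(m-1) < m^2 < N/p_1^2 \le N/p_1^{\min(2,n_1)}$, so $N \nmid \ell_i\ell_j$ whenever $i \ne j$. (The cases $n_1 = 1$ and $n_1 \ge 2$ are handled separately but both are routine.)

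It then remains to bound $cub(K_{1,m})$ from below, since cubicity is monotone under taking induced subgraphs, so $cub(\gz) \ge cub(K_{1,m})$. Given a $d$-cube representation of $K_{1,m}$, rescale so every cube has side $1$; then adjacency means that the two cube-centres are at $\ell_\infty$-distance at most $1$. All $m$ leaf-centres therefore lie in the axis-parallel box $B$ of side $2$ about the centre-cube's centre, while the centres of any two leaves, being non-adjacent, are at $\ell_\infty$-distance strictly more than $1$. Tiling $B$ by $4^d$ subboxes of side $1/2$ and noting that two points of one subbox are within distance $1/2 < 1$, distinct leaves must fall in distinct subboxes, so $m \le 4^d$, i.e. $d \ge \tfrac12\log_2 m$. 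Hence $cub(\gz) \ge cub(K_{1,m}) \ge \tfrac12\log_2\bigl(\lfloor\sqrt{N/p_1^2}\rfloor - 1\bigr)$, as desired.

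The only genuine work is the independence claim for the leaf set, together with the bookkeeping about the exponent of $p_1$ in $N$ and the few degenerate cases where $m$ is small; everything else is routine. (For the record, a slightly more careful packing argument using side-$1$ subboxes sharpens the star estimate to $cub(K_{1,m}) \ge \log_2 m$, so the factor $\tfrac12$ in the statement is not essential; it is stated this way only for a clean closed form.)
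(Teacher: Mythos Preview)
Your upper bound matches the paper's exactly: both combine $box(\gz)\le a$ with Theorem~\ref{cub} and the trivial estimate $|V(\gz)|<N$.

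For the lower bound you take a genuinely different route. The paper observes that the set $\{p_1u: 1\le u<\sqrt{N/p_1^2}\}$ is independent (by the cleaner inequality $(p_1u)(p_1v)=p_1^2uv<N$, which you could have used instead of your divisibility case split), hence $\alpha(\gz)\ge m$, and then invokes two black boxes: the folklore bound $cub(G)\ge\log\alpha(G)/\log(d+1)$ and the result from \cite{anderson2011zero} that every zero divisor graph has diameter at most~$3$, yielding the denominator $\log 4=2$. You instead attach the centre $N/p_1$ to the same independent set to get an induced $K_{1,m}$, and then bound $cub(K_{1,m})$ from first principles by an $\ell_\infty$ packing argument. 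Your approach is more self-contained --- it needs neither the diameter result for $\gr$ nor the general cubicity/independence-number inequality --- at the cost of a few extra lines verifying distinctness and the star structure. As you note yourself, the tighter packing with side-$1$ subboxes (giving $m\le 2^d$, i.e.\ $cub(K_{1,m})\ge\log_2 m$) would in fact beat the paper's constant by a factor of~$2$.
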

\begin{proof}

The upper bound follows from lemma \ref{thmgz} and theorem \ref{cub} and observing that $|V(\gz)|<N$.\\[2pt] For the lower bound, observe that $\alpha(\gz)\geq \lfloor \sqrt{N/p_1^2}\rfloor-1$ since for every pair of distinct positive\\[2pt] integers $u,v<\sqrt{N/p_1^2}$, we have $uv<N/p_1^2$. Therefore, the set $\{p_1u: u<\sqrt{N/p_1^2}\}$ forms an\\[2pt] independent set of size at least $\lfloor \sqrt{N/p_1^2}\rfloor-1$.
    It is folklore that for any graph $G$, $cub(G)\geq \frac{\log(\alpha(G))}{\log(d+1)}$,\\[2pt] where $\alpha(G)$ is the independence number of $G$ and $d$ is the diameter. In \cite{anderson2011zero} it was proved that for any \\[2pt] zero divisor graph, the diameter is less than or equal to 3.  The lower bound follows. 
\end{proof}
\begin{corollary}\label{corgzc}
    Let $N=\Pi_{i=1}^a p_i^{n_i}.$ Then, $$ \dimc(\Gamma(\mathbb{Z}_N))\leq a.$$
\end{corollary}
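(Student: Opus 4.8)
The plan is to reuse the construction from Lemma~\ref{thmgz} essentially verbatim, observing only that every threshold graph is a cograph, so an intersection representation of $\gz$ by $a$ threshold graphs is in particular an intersection representation by $a$ cographs. Concretely, I would recall that $TH = SPLIT \cap COG$ by definition, hence $TH \subseteq COG$; therefore for any graph $G$ we have $\dimc(G) \leq \dimt(G)$, since any family $G_1,\dots,G_k \in TH$ witnessing $\dimt(G) \leq k$ also lies in $COG$ and witnesses $\dimc(G) \leq k$. (One should check the technical hypothesis in Definition~\ref{defdim} that $COG$ contains $K_n$ and $K_n \setminus e$ for all $n$: $K_n$ is obtained from isolated vertices by repeated joins, and $K_n \setminus e$ is the join of $\overline{K_2}$ with $K_{n-2}$, so both are cographs.)

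Given that inequality, the corollary is immediate: Lemma~\ref{thmgz} gives $\dimt(\gz) \leq a$, and combining with $\dimc(\gz) \leq \dimt(\gz)$ yields $\dimc(\gz) \leq a$. I would phrase the proof as exactly these two lines.

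I do not anticipate any real obstacle here; the only thing to be careful about is stating the containment $TH \subseteq COG$ explicitly and confirming the class $COG$ satisfies the closure property required for $\dimc$ to be well-defined in the sense of Definition~\ref{defdim}. If one preferred a self-contained argument avoiding the general monotonicity remark, one could instead point directly to the interval graphs $I_1,\dots,I_a$ built in the proof of Lemma~\ref{thmgz}, note that each $I_i$ was shown there to be a threshold graph (see the Remark following Claim~2), hence a cograph, and that $E(\gz) = E(I_1) \cap \cdots \cap E(I_a)$; this exhibits $a$ cographs whose edge-intersection is $\gz$, giving $\dimc(\gz) \leq a$ directly.
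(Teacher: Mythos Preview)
Your proposal is correct and matches the paper's own proof essentially verbatim: the paper simply observes that threshold graphs are cographs and invokes the bound $\dimt(\gz)\leq a$ from Theorem~\ref{th} (equivalently Lemma~\ref{thmgz}). Your extra care in verifying that $COG$ contains $K_n$ and $K_n\setminus e$ so that $\dimc$ is well-defined is a nice touch the paper omits, but the argument is otherwise identical.
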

\begin{proof}
    The corollary follows from observing that threshold graphs are cographs and theorem \ref{th}.
\end{proof}
\subsection{Results on \texorpdfstring{$\gr$}{gr} for a reduced ring \texorpdfstring{$R$}{R}}
A reduced ring $R$ is a ring for which $x\in R$, $x^2=0\implies x=0$. In this section, we will consider the case when $R$ is a non-zero finite commutative ring with identity with the extra property that it is also a reduced ring. \\
 We can define an equivalence relation $\sim_E$ on any graph $G$ such that for $x,y\in V(G),$ we have $ x\sim_E y$ if and only if $N(x)=N(y)$. If $G$ is a simple graph then it has no self loops and so $\sim_E$ partitions $V(G)$ into equivalence classes $P=\{[x_1],[x_2],\cdots [x_k]\}$ for some $k$ such that $[x_i]$ induces an independent set in $G$ for every $1\leq i\leq k$.   \begin{definition}\label{redgrap}
    The reduced graph $G_E$ of a graph $G$ is defined as the graph with $V(G_E)=P$ and $E(G_E)=\{[x][y]\t{ }|\t{ }xy\in E(G)\}$. 
\end{definition} 
\noindent Following the above notation, we will denote the reduced graph of $\gr$ as $\Gamma_E(R)$.\\
\textit{Notational comment: Reduced rings should not be confused with the reduced zero divisor graphs of a ring.}\\ Since we will be working with reduced rings $R$, for every $x\in R$ we have $x^2\neq 0$ and therefore $\gr$ does not have self loops. Therefore, the equivalence classes of $\gr$ under $\sim_E$ are all independent sets. In fact, it is not hard to see that for distinct vertices $u,v\in V(\gr),$ $uv\in E(\gr)$ if and only if $[u][v]\in E(\gre)$. 

\begin{theorem}\label{thm:redring}
    Let $R$ be a non-zero finite commutative ring with identity which is also a reduced ring. Then $$\lfloor k/2\rfloor\leq box(\Gamma(R))\leq\dimt(\gr)\leq  k,$$ where $k$ is the number of minimal prime ideals of $R$. 
\end{theorem}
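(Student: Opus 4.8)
The plan is to reduce the statement to pure set combinatorics via the structure of finite reduced rings. Since $R$ is finite it is Artinian, hence a finite product of local Artinian rings, and a reduced local Artinian ring is a field (its maximal ideal is nilpotent, and being reduced it must be zero). Thus $R\cong R_1\times\cdots\times R_k$ with each $R_i$ a finite field, and the $k$ minimal primes of $R$ are exactly the kernels of the coordinate projections, so the ``$k$'' in the statement is this number of factors. For a zero divisor $x=(x_1,\dots,x_k)$ put $\operatorname{supp}(x)=\{i:x_i\neq 0\}$; this is a nonempty proper subset of $[k]$, and because each $R_i$ is a field we have $xy=0$ iff $\operatorname{supp}(x)\cap\operatorname{supp}(y)=\emptyset$. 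So $\gr$ is, up to the harmless identification of vertices with equal support, the ``disjointness graph'' on the nonempty proper subsets of $[k]$, and both bounds will be read off from this picture.

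For the upper bound $\dimt(\gr)\le k$, for each $i\in[k]$ I would define an interval (in fact threshold) graph $I_i$ on $V(\gr)=Z(R)$: the zero divisors with $x_i=0$ form a clique, those with $x_i\neq 0$ form an independent set, and every pair across these two parts is adjacent. This complete split graph is a threshold graph in the sense of Definition \ref{thresholddef}, with threshold $S=1$ and weights $a_x=1$ if $x_i=0$ and $a_x=0$ otherwise. By construction $x$ and $y$ fail to be adjacent in $I_i$ precisely when $i\in\operatorname{supp}(x)\cap\operatorname{supp}(y)$; hence $\bigcap_{i=1}^k E(I_i)$ is exactly the set of pairs with disjoint support, which is $E(\gr)$. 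In particular two distinct zero divisors with the same support share an index of that support and are separated by the corresponding $I_i$, so they are correctly non-adjacent in the intersection. This shows $\gr$ is the intersection of $k$ threshold graphs, so $\dimt(\gr)\le k$, and $box(\gr)\le\dimt(\gr)$ since $TH\subset INT$.

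For the lower bound $box(\gr)\ge\lfloor k/2\rfloor$ I would, as in Lemma \ref{lower}, produce the Roberts graph $\overline{\lfloor k/2\rfloor K_2}$ as an induced subgraph and invoke Observation \ref{sub} together with the fact that $box(\overline{tK_2})=t$. Assume $k\ge 3$ and set $t=\lfloor k/2\rfloor$. For $1\le j\le t$ let $x_j\in R$ have a $1$ in coordinate $2j-1$ and $0$ in every other coordinate, and let $y_j$ have a $1$ in coordinates $2j-1$ and $2j$ and $0$ in every other coordinate. Each is nonzero and, since $\{2j-1,2j\}\subsetneq[k]$ for all $j\le t$ when $k\ge 3$, is a non-unit, hence a genuine zero divisor. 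Within a pair, $x_jy_j$ is nonzero in coordinate $2j-1$, so $x_j\not\sim y_j$; for $j\neq l$ the supports $\{2j-1\},\{2j-1,2j\}$ are disjoint from $\{2l-1\},\{2l-1,2l\}$, so all four cross pairs are edges. Therefore $\{x_1,y_1,\dots,x_t,y_t\}$ induces $\overline{tK_2}$ in $\gr$, giving the bound for $k\ge 3$; the remaining cases $k\le 2$ are immediate by inspection.

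The genuinely load-bearing step is the opening reduction to a product of fields: once $\gr$ is recognized as the disjointness graph on nonempty proper subsets of $[k]$, the threshold supergraphs above and the Roberts subgraph are essentially forced. The only points that need care are (i) checking that the intersection of the $I_i$ separates distinct zero divisors with equal support, and (ii) the constraint $k\ge 3$ in the lower-bound construction, needed so that the two-element index sets $\{2j-1,2j\}$ remain proper subsets of $[k]$ and the corresponding $y_j$ are actual zero divisors rather than units.
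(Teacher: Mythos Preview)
Your proof is correct and follows essentially the same route as the paper's: the threshold supergraphs $I_i$ (complete split graphs separating vertices by whether the $i$th coordinate vanishes) and the Roberts subgraph built from supports $\{2j-1\}$ and $\{2j-1,2j\}$ are exactly the constructions the authors use. The only cosmetic difference is that you invoke the structure theorem for finite reduced rings directly to get $R\cong\prod R_i$ with $R_i$ fields, whereas the paper cites the known isomorphism $\Gamma_E(R)\cong\Gamma(\mathbb{Z}_2^k)$ and works through the reduced graph; your handling of the small-$k$ boundary is also no more and no less careful than the paper's.
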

\begin{proof}

 From \cite{ANDERSON20121626} we know that there exists an isomorphism $\phi:\Gamma_E(R) \rightarrow \Gamma(R')$, where $R' = \mathbb{Z}_2^k = \mathbb{Z}_2 \times \mathbb{Z}_2 \times ... \times \mathbb{Z}_2$ ($k$-copies). Note that, $V(\Gamma(R'))=(\Pi_{i=1}^k\mathbb{Z}_2) \setminus \{\mathbb{1},\mathbb{0}\}$ where $\mathbb{1},\mathbb{0}$ are the all ones and all zeroes vectors. This isomorphism allows us to view $V(\gre)$ as vectors consisting of $0$s and $1$s such that distinct vertices $x,y\in V(\gre)$ are adjacent in $\gre$ if and only if the point-wise product of the corresponding vectors $\phi(x)$ and $\phi(y)$ equals the zero vector.  \\
    Let $\pi_i:\Gamma(R')\rightarrow\mathbb{Z}_2$ be the function that returns the $i$th coordinate. Using $\pi_i$, we can rewrite the above observation as follows: for vertices $x,y\in V(\gre)$,  $xy\notin E(\Gamma_E(R))$  if and only if $\exists$ $i$ such that $\pi_i\circ\phi(x)=\pi_i\circ\phi(y)=1$, i.e. for both the vectors the $i$th component is $1$.\\ Define the sets $A_{i}=\{x\in \gre: \pi_i\circ\phi(x)=1 \}$. Then the induced graph on the vertices of $A_i$ is an independent set in $\gre$ for each $1\leq i\leq k$. Thus by the above discussion $xy\notin E(\gre)$ if and only if $\exists $ $i$ such that $x,y\in A_i$\\ 
 Fix an injective function $g:\gr\rightarrow [0,1]$. For each $i$, define the graph $I_i$ on $V(\gr)$ as follows: 
 $$I_i(v)=\begin{cases}
    [g(v),g(v)] &\t{ if }  [v]\in A_i\\
    [0,1] & \t{otherwise }
 \end{cases}$$
  The following claim is easy to see:\vspace{5pt}
 \begin{claim}5
     For two distinct vertices $u,v\in V(\gr)$, $uv\notin E(I_i)$ if and only if both $[u],[v]\in A_i$.
 \end{claim} $\blacksquare$ \\[5pt]
 We will show that $E(\gr)=E(I_1)\cap E(I_2) \cdots \cap E(I_k)$.\\
 Fix $u,v\in \gr$. It is easy to verify the following chain of equivalences:
 \begin{align}
     uv\in E(\gr) &\iff [u][v]\in E(\gre)\\& \iff \phi([u])\phi([v])\in E(\Gamma(R'))\\& \iff \forall \t{ } 1\leq i\leq k,\t{ } \pi_i(\phi([u]))\cdot \pi_i(\phi([u]))=0\\& \iff \forall \t{ } 1\leq i\leq k, \t{ } \t{at least one of }[u] \t{ or }[v] \t{ belongs to } A_i^c.
 \end{align}
 By claim 5 and implication (4), it follows that if $uv\in E(\gr)$ then $uv\in E(I_i)$ $\forall$ $ 1\leq i\leq k$ and if $uv\notin E(\gr)$ then $\exists$ $i$ such that $uv\notin E(I_i)$. Therefore, $E(\gr)=E(I_1)\cap E(I_2) \cdots \cap E(I_k)$.\\ Each interval graph $I_i$ is also a threshold graph. To see this, assign the vertex $v$ the weight $a_v=0$ if $[v]\in A_i$ and $a_v=1$ otherwise. By claim 5, $u$ and $v$ are adjacent in $I_i$ if and only if $a_u+a_v\geq 1$. By definition \ref{thresholddef}, $I_i$ are threshold graphs. This proves the upper bound.\\ 
  Finally, suppose $e_j$ be the $j$th standard basis vector, i.e the vector with $1$ at the $j$th position and $0$ everywhere else. Define the set $B_i=\{e_{2i-1},e_{2i-1}+e_{2i}\}$ for $1 \leq i\leq \floor{k/2}$. \\ Observe that $\bigsqcup_{i=1}^{\floor{k/2}}B_i$ induces the Robert's graph $\overline{\floor{k/2}K_2}$ in $\Gamma(R')$. This is easy to see since $B_i$ induces an independent set for each $i\in [\floor{k/2}]$ and for $i\neq j$ all edges are present between $B_i$ and $B_j$. So $box(\gr)\geq box(\gre)=box(\Gamma(R'))\geq  \lfloor k/2\rfloor$. This gives the lower bound. 
\end{proof}
\begin{remark}\label{err}
    In \cite{TK}, it was shown that $box(\gr)\geq k$. The proof for the lower bound assumed a fact which is not always true. The proof goes as follows:\\ They begin with the fact that $\gre\cong\Gamma(R')$, where $R'\cong\Z_2\times\Z_2\cdots \Z_2$($k$ times). Recall that the vertices of $\gre$ are equivalence classes of vertices of $\gr$ under the equivalence relation $u\sim_Ev$ if $N(u)=N(v)$. Suppose the isomorphism between $\gre$ and $\Gamma(R')$ is given by $\phi$. Let $e_i$ be the vector consisting of $1$ at the $i$th position and $0$ elsewhere. For $1\leq i\leq k$, consider the subgraphs $G_i$ induced by the equivalence classes $[\phi^{-1}(e_i)]$ in $\gr$. Note that for $1\leq i\leq k$, $G_i$ is a graph with no edges, i.e. it is an independent set in $\gr$. The mistake in the proof comes from assuming that for these graphs $G_i$, $box(G_i)=1$. With this incorrect assumption, they argue that the graph induced by the set $V(G_1)\cup V(G_2) \cdots \cup V(G_k)$ has boxicity $k$. Since this graph exists as an induced subgraph of $\gr$, they argue that $box(\gr)\geq k$. \\
    We provide a counterexample for the assumption that $box(G_i)=1$ for all $1\leq i\leq k$. Observe that this is true only when $|V(G_i)|>1$. Consequently, it suffices to show that there exist graphs $\gr$ such that for some $1\leq i\leq k$, $|V(G_i)|=1$.\\ Take $ N=2\times \Pi_{i=2}^{k} p_i$ and $R=\Z_N$. Note that $R$ is a reduced graph as every $u\in \gz$ has at least one prime factor missing and so $u^2\not\equiv 0\pmod N$. The elements of $\gre$ are all the divisors of $N$(see \cite{TK} for a proof). Therefore the elements of $\gre$ are of the form $2^{\alpha_1}\times \Pi_{i=2}^{k} p_i^{\alpha_i}$ where $\alpha_i\in \{0,1\}$ for $1\leq i\leq k$. Define $\phi:\gre\rightarrow \Gamma(R')$ given by $\phi(2^{\alpha_1}\times \Pi_{i=2}^{k} p_i^{\alpha_i})=(1-\alpha_1, 1-\alpha_2, \cdots , 1-\alpha_k)$. It is easy to verify that the map $\phi$ is an isomorphism. Observe that $\phi^{-1}(e_i)=\frac{N}{p_i}$. Therefore, the equivalence class $[\phi^{-1}(e_i)]$ is the set $\left\{\frac{N}{p_i}, \frac{2N}{p_i}, \cdots , \frac{tN}{p_i}\right\}$ where $t=p_i-1$. In particular, the equivalence class $[\phi^{-1}(e_1)]=\left\{\frac{N}{2}\right\}$ is a singleton set. It follows that the graph $G_1$ induced by $[\phi^{-1}(e_1)]$ is an isolated vertex and hence $box(G_1)=0$. This contradicts that $box(G_i)=1$ for $1\leq i\leq k$.
\end{remark}

\begin{corollary}
        Let $R$ be a non-zero finite reduced commutative ring with identity. Then, $$\dimc(\Gamma(R))\leq k,$$ where $k$ is the number of minimal prime ideals of $R$.
\end{corollary}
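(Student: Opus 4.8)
The plan is to piggyback directly on Theorem~\ref{thm:redring}. That theorem already exhibits threshold graphs $I_1, \dots, I_k$ on the common vertex set $V(\Gamma(R))$ with $E(\Gamma(R)) = E(I_1) \cap \cdots \cap E(I_k)$, and the bound we want is an immediate weakening of that construction. So the whole argument is essentially one observation: threshold graphs are cographs.

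First I would recall that $TH = SPLIT \cap COG$ by definition, so in particular $TH \subseteq COG$; hence each of the threshold graphs $I_i$ produced in the proof of Theorem~\ref{thm:redring} is also a cograph. Next I would invoke Definition~\ref{defdim} of intersection dimension with $\mathcal{A} = COG$: since $COG$ contains $K_n$ and $K_n \setminus e$ for all $n$ (a complete graph is a join of singletons, and $K_n \setminus e$ is the join of a $\overline{K_2}$ with $n-2$ singletons, so both are cographs), $\dimc$ is well-defined, and the family $\{I_1, \dots, I_k\}$ witnesses $E(\Gamma(R)) = E(I_1) \cap \cdots \cap E(I_k)$ with each $I_i \in COG$. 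By the definition of $\dimc$ as the minimum size of such a family, this yields $\dimc(\Gamma(R)) \le k$.

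I do not expect any genuine obstacle here — the statement is a one-line corollary of Theorem~\ref{thm:redring} via the inclusion $TH \subseteq COG$, exactly parallel to Corollary~\ref{corgzc} which derived $\dimc(\Gamma(\mathbb{Z}_N)) \le a$ from Theorem~\ref{th} the same way. The only thing worth being careful about is making explicit that the $I_i$ from the earlier proof were verified to be threshold graphs (which the proof of Theorem~\ref{thm:redring} does, by assigning weights $a_v \in \{0,1\}$ and threshold $S=1$), so that the cograph membership is justified rather than merely asserted. No lower bound is claimed in this corollary, so nothing about Roberts' graph or the minimal-prime-ideal count is needed beyond carrying the parameter $k$ through.
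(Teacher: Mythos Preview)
Your proposal is correct and matches the paper's intended argument: the paper states this corollary without proof, but the parallel Corollary~\ref{corgzc} is proved by exactly the observation you make (threshold graphs are cographs, so the $k$ threshold supergraphs from Theorem~\ref{thm:redring} witness $\dimc(\Gamma(R))\le k$). There is nothing to add or correct.
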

\section{Conclusion} 

In this paper, we have exactly determined the boxicity of $\gz$. In corollary \ref{cubi} we have shown that $\frac{\log (\lfloor \sqrt{N/p_1^2}\rfloor-1)}{2}\leq cub(\Gamma(\mathbb{Z}_N))\leq  a\ceil{\log N}$. In \cite{adiga2009cubicity}, it was shown that for an interval graph $G$, we have\\[2pt] $cub(G)\leq \ceil{\log(\alpha(G))}$ where $\alpha(G)$ is the size of the largest independent set in $G$. Using this result and with a little more work, it can be shown that $$cub(\gz)\leq a \log N+ \sum_{j=1}^a\left\{\log\left(1-\prod_{i=1}^a\left(1-\frac{1}{p_i}\right)-\frac{1}{p_j^{\ceil{n_j/2}}}\right),\right\}$$ by finding the largest independent set of every interval graph defined in the proof of lemma \ref{thmgz}. We omit the proof since it is not a significant improvement. Note that the second term in the upper bound is negative. Similarly, the lower bound can be improved to $$cub(\gz)\geq \frac{1}{2}\log N+\frac{1}{2}\max_{j\in [a]}\log\left(1-\prod_{i=1}^a\left(1-\frac{1}{p_i}\right)-\frac{1}{p_j^{\ceil{n_j/2}}}\right)$$ by the same kind of argument as in corollary \ref{cubi}. It is an interesting open problem to find the exact cubicity of this graph. More precisely, we pose the following problem,
\begin{question}
    Is it possible to show $\Omega(\log N)\leq cub(\gz)\leq O(a+\log N).$
\end{question} 
In this paper, we have provided bounds for $box(\gr)$. When $R$ is a non-zero commutative ring with identity that is also a reduced ring and taking $k$ to be the size of the set of minimal prime ideals of $R$, we have showed that $\floor{\frac{k}{2}}\leq box(\gr)\leq \dimt(\gz)\leq k$. 
It is an interesting question to ask if the gap between the lower bound and the upper bound can be closed.  \\
We also wonder whether any lower bounds for cograph dimension can be obtained for both $\gz$ and $\gr$. 
\bibliographystyle{alpha}
\bibliography{references}

\end{document}